\newtheorem{theorem}{Theorem}
\newtheorem{corollary}[theorem]{Corollary}
\newtheorem{lemma}[theorem]{Lemma}
\newtheorem{remark}[theorem]{Remark}
\newenvironment{proof}{\noindent{\bf Proof:}}{\hfill\fbox{}\vspace*{1mm}}
\newcommand{\pmu}{\mbox{\boldmath$\mu$}}
\providecommand{\DIFdeltex}[1]{{\protect\color{red}\sout{#1}}}                      
\newif\ifdiff
  \newcommand{\del}[1]{\DIFdeltex{#1}}
  \newcommand{\del}[1]{}
\begin{document}
\title{\bf On Correlated Defaults and Incomplete Information}
\author{
Wai-Ki Ching\thanks{Advanced Modeling and Applied Computing Laboratory,
Department of Mathematics, The University of Hong Kong,
Pokfulam Road, Hong Kong.
Email: wching@hku.hk.
}
\and Jia-Wen Gu$^*$\thanks{
$^*$Corresponding author, Department of Mathematical Sciences, University of Copenhagen, Universitetsparken 5,
DK-2100 Copenhagen, Denmark.
Email: jwgu.hku@gmail.com.}
\and Harry Zheng\thanks{Department of Mathematics,
Imperial College, London, SW7 2AZ, UK.
Email: h.zheng@imperial.ac.uk.
}
}

\maketitle

\begin{abstract}
In this paper, we study a continuous time structural asset value model
for two correlated firms using a two-dimensional Brownian motion.
We consider the situation of incomplete information,
where the information set available to the market participants
includes the default time of each firm and the periodic asset value reports.
In this situation, the default time of each firm becomes a
totally inaccessible stopping time to the market participants.
The original structural model is first transformed to a reduced-form model.
Then the conditional distribution of the default time together with
the asset value of each name are derived.
We prove the existence of the intensity processes of default times and
also give the explicit form of the intensity processes.
Numerical studies on the intensities
of the two correlated names are conducted for some special cases.
We also indicate the possible future research extension into three names case by considering a special correlation structure.
\end{abstract}

\noindent
{\bf Keywords:} Correlated Defaults, Brownian Motions, Incomplete Information,
Intensity Models.

\thispagestyle{empty}

\section{Introduction}

There are two strands of literature for modeling credit risk,
namely, the structural firm value model originated by
Black and Scholes
\cite{BP} and Merton \cite{Merton}
and the reduced-form model
by Jarrow and Turnbull \cite{JT1,JT2}.
In the classical
firm value approach the asset value of a firm is described by a geometric Brownian motion
and the default is triggered when the asset value falls below a given default barrier level.
In the reduced-form intensity-based approach, defaults are modeled as exogenous events
and their arrivals are described by using random point processes and the default time
is a totally inaccessible stopping time.

{
The fundamental idea behind the paper dates back to the
seminal paper by Duffie and Lando [5] where a concrete example is given to show how a reduced-form model can be obtained from a structural model by restricting the information structure. They
consider a noisy and discretely observed firm asset value
in a continuous-time setting,
where the firm asset value is given by a diffusion process.
This idea and associated mathematical issue has been studied by Eillott, Jeanblanc and Yor  \cite{EJY}, Guo, Jarrow and Zeng  \cite{Guo}. It has been shown
in these papers that it is possible to transform a structural model with
a predictable default time into a reduced-form model,
with a totally inaccessible default time, by
introducing ``incomplete information''.

However, most of the papers focus on one dimensional diffusions,
while less attention has been placed to the correlation structure
in the incomplete information model with multiple firms.
Giesecke \cite{Giesecke} proposes a model of correlated multi-firm default with incomplete
information. Bond investors can observe issuers' assets and defaults, but not the default threshold
levels which are only revealed at times of default. Stochastic dependence between default events is
induced through correlated asset values and correlated default thresholds.

We study the correlation structure in an incomplete information framework
with asset values of firms being driven by
a two-dimensional correlated Brownian motion and with known default threshold levels.
The available information set
includes the periodic asset value reports and the default times which are
totally inaccessible stopping times to the market participants.
We can transform the original structural
model into a reduced-form intensity-based model and find the correlation
structure of the two firms under such a transformation.

The main contribution of this paper is that we investigate the transformation from a structural model to reduced-form model with ``incomplete information'' on multiple correlated assets,  provide the analytical expression for the conditional distribution of the default time, and prove the existence of the intensity process of default times, together with its explicit form.}
This paper sheds some new lights on the form of suitable intensity processes for portfolio credit risk.
Our findings show that it is
more reasonable to consider a contagion model, in which interacting intensities are time-varying with decaying default impacts (\cite{Gu3,Yu,ZJ}).

The remainder of the paper is organized as follows.
In Section 2 we present our model framework and give our main result: the default intensity
process characterizing the correlation structure of the two firms.
In Section 3 we illustrate our results by some numerical examples.
In Section 4 we give the proof of our main result and preliminary results concerning the conditional default
time and asset value distributions.
In Section 5 we give concluding remarks and point out further
research issues on the three-name case.

\section{The Model and the Main Result}\label{ModelResult}

Assume $(\Omega, P, \mathcal{F})$ is a given probability space.
The asset value process of firm $i$ is given by $V_i$, $i=1, 2$.
Define $\tau_i$ the default time of firm $i$ by
$$
\tau_i := \inf\{t>0: V_i(t) \leq B_i\}, \ i=1,2,
$$
where $B_i$ the default threshold value
of firm $i$ satisfying  $0< B_i < V_i(0)$. Denote by $X_i(t)=\ln \frac{V_i(t)}{B_i}$ for $t \geq 0$ and ${\bf X} = (X_1(t), X_2(t))^T$, ${\bf X}^T$ is the transpose of ${\bf X}$.
Assume ${\bf X}$ stisfies the SDE,
\begin{equation}\label{f1}
d{\bf X}(t)= \mbox{\boldmath$\mu$} dt+ \Sigma d{\bf W}(t),
\end{equation}
where ${\bf W}$ is a two-dimensional standard Brownian motion, $\pmu$ is a constant vector and $\Sigma$ is a constant matrix
given by
$$
\Sigma
=
\left(
\begin{array}{cc}
\sigma_1 \sqrt{1-\rho^2}& \sigma_1 \rho\\
0 & \sigma_2
\end{array}
\right),
$$
in which $\sigma_i >0$ is the volatility
of $X_i$, and $\rho$ is the correlation coefficient satisfying $|\rho| \leq 1$.
Note that $X_i(0) >0, i=1,2$.
By the continuity of $X_i$,
the default time of firm $i$ can be written equivalently as
$$
\tau_i=\inf\{t>0: X_i(t) = 0\}, \quad i=1,2.
$$
Assume investors receive periodic information,
at some fixed time instants
$$
0=t_0< t_1 < t_2 < \ldots < t_k< \ldots.
$$
The information available to investors up to time $t$ is given by
$$
\mathcal{F}_t:=\mathcal{F}_{X_1} \vee \mathcal{F}_{X_2} \vee \sigma(1_{\{\tau_1 \leq s\}}, 1_{\{\tau_2 \leq s\}}, 0 \leq s \leq t),
$$
where
$$
\mathcal{F}_{X_i}(t):=\sigma \left( X_i(t_j), j=0,1,\ldots, k_i
\quad {\rm with} \quad
t_{k_i} \leq \min \{t,\tau_i\}<t_{k_i+1}\right), \quad  i=1,2.
$$

{Even though $X_i$ is a diffusion process and $B_i$ is a constant, $\tau_i$ is a totally inaccessible stopping time as investors can only observe the information of $X_i$ at discrete times $t_k, k=0, 1, \ldots$, and not the information of $X_i$ at all time $t \geq 0$ (which would lead to $\tau_i$
being a predictable stopping time).
The main aim of this paper is to investigate the transformation from a structural model to a reduced-form model with this new information flow on two assets and deal with defaults with specified correlation structure.}

We reformulate the problem by using a transformation
proposed in Iyengar  \cite{Iyengar} and Metzler  \cite{Metzler} and the default times $\tau_i \ (i=1,2)$ can be redefined.
Let  { ${\bf Z}(t)=\Sigma^{-1}{\bf X}(t)$},
we have
$$
d{\bf Z}(t)={\bf m}dt+d{\bf W}(t),
$$
where
$
{\bf m} = \Sigma^{-1} \mbox{\boldmath$\mu$}.
$
The equivalent default times can be redefined as follows:
{
$$
\left\{
\begin{array}{ll}
 \tau_1&=\inf\left\{t>0: Z_1(t)=-\frac{\rho}{\sqrt{1-\rho^2}}Z_2(t)\right\},\\
\tau_2&=\inf\left\{t>0: Z_2(t)=0\right\}.
\end{array}
\right.
$$}
Denote by
$$
\alpha=\left\{
\begin{array}{ll}
\pi+\tan^{-1}\left(-\frac{\sqrt{1-\rho^2}}{\rho}\right),& \rho>0,\\
\frac{\pi}{2},& \rho=0,\\
\tan^{-1}\left(-\frac{\sqrt{1-\rho^2}}{\rho}\right),& \rho<0.
\end{array}
\right.
$$
We have $\alpha \in (0, \pi)$.
Define $${\tau:=\min(\tau_1, \tau_2)}.$$
Then $\tau$ is the first exit time of
${\bf Z}$ from the wedge
$$
\Omega_{\alpha}:=\{(r\cos \theta, r \sin \theta)^T: r>0, 0< \theta < \alpha\}
$$
with the initial position of ${\bf Z}$ given by
$$
{\bf z}_{t_0}:={\bf Z}(0)=\Sigma^{-1}{\bf X}(0)=(r_0 \cos \theta_0, r_0 \sin \theta_0)^T,
\quad{\rm where} \quad 0< \theta_0 <\alpha.
$$
The unconditional joint distributions of $({\bf Z}(\tau), \tau)$ and $(\tau_1, \tau_2)$ are
first derived in Iyengar \cite{Iyengar} and corrected in Metzler \cite{Metzler}.
The next theorem states their remarkable results.
\begin{theorem}\label{IyengarMetzler}

(Iyengar \cite{Iyengar} and Metzler \cite{Metzler})
{For ${\bf m}=(m_1, m_2)^T \in \mathcal{R}^2$,
\begin{equation}\label{p2}
P({\bf Z}(\tau) \in d{\bf z}, \tau \in dt)
= f(r, t, {\bf z}_{t_0}) dr dt,
\end{equation}
where ${\bf z} = (r \cos \alpha, r \sin \alpha)^T$ and
\begin{equation}\label{function f}
f(r,t, {\bf z}_{t_0})=\exp\left({\bf m}^{T}[(r \cos \alpha, r \sin \alpha)^T-{\bf z}_0]
-\frac{|{\bf m}|^2s}{2}\right)b(r,t, {\bf z}_{t_0})
\end{equation}
and
\begin{equation}\label{function b}
b(r, t, {\bf z}_{t_0})=
\frac{\pi}{\alpha^2 t r}\exp\left(-\frac{r^2+r_0^2}{2t}\right)
\sum_{n=1}^{\infty} n \sin \left(\frac{n \pi (\alpha-\theta_0)}{\alpha}\right)
I_{\frac{n \pi}{\alpha}}\left(\frac{r r_0}{t}\right)
\end{equation}
and $I_v$ is the modified Bessel function
of the first kind of order $v$, i.e.,
\begin{equation}\label{Bessel}
I_v(z)=\sum_{k=0}^{\infty}\frac{(\frac{1}{2}z)^{2k+v}}{k!  \Gamma(v+k+1)}.
\end{equation}}
For $s<t$, let
\begin{equation}\label{function g}
g(s, t, {\bf z}_{t_0})= \int_0^{\infty} f(r,s,{\bf z}_{t_0}) \pi(r\sin \alpha, t-s) dr,
\end{equation}
where
\begin{equation}\label{function pi}
\pi(x, h)= \frac{x}{\sqrt{2 \pi h^3}} \exp \left(-\frac{(x+m_2 h)^2}{2h}\right).
\end{equation}
We have
\begin{equation}
P(\tau_1 \in ds, \tau_2 \in dt)
=g(s, t, {\bf z}_{t_0}) ds dt.
\end{equation}
\end{theorem}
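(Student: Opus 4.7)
The plan is to compute the joint density on the region $\{s<t\}$ (i.e.\ $\tau_1<\tau_2$) by conditioning on the first exit of ${\bf Z}$ from the wedge $\Omega_\alpha$ and then invoking the strong Markov property to describe the subsequent hitting of the $\theta=0$ ray by $Z_2$. Since $\alpha\in(0,\pi)$, $\sin\alpha>0$, and the event $\{\tau_1<\tau_2\}$ coincides with the event that ${\bf Z}$ exits $\Omega_\alpha$ through the upper ray $\theta=\alpha$; in that case ${\bf Z}(\tau)=(r\cos\alpha,r\sin\alpha)^T$ for some $r>0$, so that $Z_2(\tau)=r\sin\alpha>0$. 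By Theorem \ref{IyengarMetzler}, the joint law of $(\tau,r)$ restricted to this piece of the boundary already has density $f(r,s,{\bf z}_{t_0})$.

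Next I would exploit that $dZ_i=m_i\,dt+dW_i$ for $i=1,2$ with independent driving Brownian motions $W_1$ and $W_2$. By the strong Markov property of ${\bf Z}$ at the stopping time $\tau$, the process $u\mapsto Z_2(\tau+u)-Z_2(\tau)$ is a Brownian motion with drift $m_2$ starting at $0$, independent of the pre-$\tau$ sigma-algebra. Consequently, on $\{\tau_1=s,\ {\bf Z}(\tau)=(r\cos\alpha,r\sin\alpha)^T\}$, the remaining default time $\tau_2$ has the same conditional law as $s+T_{r\sin\alpha}$, where $T_x$ denotes the first passage time to $0$ of a Brownian motion with drift $m_2$ started at $x>0$. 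It is classical that $T_x$ admits the inverse-Gaussian density $\pi(x,\cdot)$ of (\ref{function pi}); this can be derived by reflection (driftless case) together with a Girsanov change of measure to reinstate the drift.

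Combining the two densities and integrating out the exit height $r$ gives
\[
P(\tau_1\in ds,\tau_2\in dt)=\left(\int_0^{\infty} f(r,s,{\bf z}_{t_0})\,\pi(r\sin\alpha,\,t-s)\,dr\right)ds\,dt=g(s,t,{\bf z}_{t_0})\,ds\,dt
\]
for $s<t$, which is the claim.

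The main obstacle I anticipate is making the conditional-law step rigorous: one must justify that, conditionally on $(\tau,{\bf Z}(\tau))$ restricted to the upper-boundary component $\{\tau_1<\tau_2\}$, the post-$\tau$ trajectory of $Z_2$ is still a Brownian motion with drift $m_2$ started at $r\sin\alpha$. This is a clean consequence of the strong Markov property of the 2D process ${\bf Z}$ together with the independence of $W_1$ and $W_2$, but the bookkeeping---disintegrating on the correct ray of $\partial\Omega_\alpha$ and then applying Fubini to integrate out the auxiliary variable $r$---needs to be carried out with some care.
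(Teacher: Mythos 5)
The paper offers no proof of this theorem at all: it is quoted verbatim from Iyengar \cite{Iyengar} as corrected by Metzler \cite{Metzler}, so there is nothing internal to compare your argument against. Judged on its own terms, the half of the theorem you actually argue --- passing from the exit law (\ref{p2}) to the joint law $P(\tau_1\in ds,\tau_2\in dt)=g\,ds\,dt$ on $\{s<t\}$ --- is correct and is essentially Metzler's own route: on $\{\tau_1<\tau_2\}$ the process exits $\Omega_\alpha$ through the ray $\theta=\alpha$ at height $Z_2(\tau)=r\sin\alpha>0$, the strong Markov property makes $u\mapsto Z_2(\tau+u)$ a Brownian motion with drift $m_2$ started there, its first passage to $0$ has the inverse-Gaussian density $\pi(r\sin\alpha,\cdot)$ of (\ref{function pi}), and integrating out $r$ against $f(r,s,{\bf z}_{t_0})$ gives $g$. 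Your identification of the ray $\theta=\alpha$ with the set $\{Z_1=-\tfrac{\rho}{\sqrt{1-\rho^2}}Z_2\}$ is consistent with the paper's definition of $\alpha$.

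The genuine gap is that you assume the first and hardest display of the theorem rather than prove it: you write ``By Theorem \ref{IyengarMetzler}, the joint law of $(\tau,r)$ restricted to this piece of the boundary already has density $f$,'' which is circular, since (\ref{p2})--(\ref{function b}) are part of the statement. A complete proof must derive $b(r,t,{\bf z}_{t_0})$ --- via the eigenfunction (Bessel-series) expansion of the Dirichlet heat kernel of the wedge in polar coordinates and its inward normal derivative on the ray $\theta=\alpha$, or equivalently by solving the associated PDE as Iyengar does --- and then obtain the exponential prefactor $\exp({\bf m}^T[{\bf z}-{\bf z}_0]-|{\bf m}|^2t/2)$ by a Girsanov change of measure removing the drift ${\bf m}$. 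None of that appears in your sketch, and it is the substantive content of the result. (Incidentally, the $s$ in the exponent of (\ref{function f}) should read $t$; this is a typo in the statement, not in your argument.)
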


\begin{remark}\label{piFunction}
The function $\pi$ in Theorem \ref{IyengarMetzler} is the density of the hitting time to zero at time $h$ of a single Brownian motion with drift $m_2$ and
initial value $x$.
One can easily check by taking the partial derivative of $\pi(x, h)$ that,
for fixed $h>0$, $\pi(., h)$ achieves its maximum at $x=\frac{1}{2}\left(\sqrt{m_2^2h^2+4h}-m_2h\right)$
and for fixed $x>0$, $\pi(x, .)$ achieves its maximum at $h=\sqrt{\frac{x^2}{m_2^2}+\frac{9}{4m_2^4}}-\frac{3}{2m_2^2}$.
\end{remark}

    The conditional distributions of $\tau_i$ and $Z_i$,
can be derived and characterized with the given information flow, see Lemmas \ref{thm1} and \ref{thm2} in Section \ref{Proof}.

The main objective of this paper is to find the intensity process $\lambda_i$ of $\tau_i$, given the filtration $\mathcal{F}_u$.
The default time $\tau_i$ has an intensity process $\lambda_i$ with respect to the filtration $\mathcal{F}_u$
if $\lambda_i$ is a non-negative progressively measurable process satisfying
$$
\int_0^t \lambda_i(u) du < \infty
$$
a.s. for all $t$, such that
$$
1_{\{\tau_i \leq t\}}-\int_0^t \lambda_i(u) du, t\geq 0
$$
is an $\mathcal{F}_t$-martingale. Since
investors receive periodic asset information at times $t_k, k=0, 1, \ldots$, the values
${\bf z}_{t_k}:={\bf Z}(t_k)=(r_k \cos \theta_k, r_k \sin \theta_k)^T$ are known at $t_k$.

We can now state the main result of the paper.
\begin{theorem}\label{thm4}
For $t_j \leq u< t_{j+1}$, the default intensity process of $\tau_2$ exists and is given by
\begin{equation}\label{intensity}
\begin{array}{ll}
\lambda_2(u)=& \displaystyle 1_{\{\tau_2>u\}} \cdot 1_{\{\tau_1>u\}}
\left(\frac{\pi({ z}_{t_j}^{(2)},u-t_j)-\int_0^{u-t_j} g(s, u-t_j, {\bf z}_{t_j}) ds }{\int_{\Omega_{\alpha}} h(r, \theta, u-t_j, {\bf z}_{t_j}) dr d\theta}\right)\\
&\displaystyle +1_{\{t_j \leq s < u\}}\cdot 1_{\{\tau_2 >u\}} \cdot 1_{\{\tau_1 =s\}} \left(\frac{ g(s-t_j, u-t_j, {\bf z}_{t_j})}{\int_{u-t_j}^{\infty} g(s-t_j, t, {\bf z}_{t_j})dt}\right)\\
&\displaystyle +1_{\{\tau_2>u\}}\cdot 1_{\{\tau_1< t_j\}}\left(\frac{\pi({z}_{t_j}^{(2)},u-t_j)}{\int_{u-t_j}^{\infty} \pi({z}_{t_j}^{(2)},t) dt}\right),
\end{array}
\end{equation}
where ${ z}_{t_j}^{(2)}$ represents the second component of ${\bf z}_{t_j}$,
functions $g$ and $\pi$ are defined in (\ref{function g}) and (\ref{function pi}), respectively, and $h$
is defined by
\begin{equation}\label{function h}
\begin{array}{lll}
h(r, \theta, t, {\bf z}_j)
&=&\exp\left({\bf m}^T [(r \cos \theta, r \sin \theta)^T - {\bf z}_j]-\frac{|{\bf m}|^2t}{2}\right)\frac{2r}{t \alpha} \times\\
&& \exp \left( - \frac{r^2+r^2_j}{2t}\right) \sum_{n=1}^\infty \sin(\frac{n \pi \theta}{\alpha})\sin(\frac{n \pi \theta_j}{\alpha})I_{\frac{n \pi}{\alpha}}(\frac{r r_j}{t}).
\end{array}
\end{equation}
\end{theorem}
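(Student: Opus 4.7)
The plan is to compute, on the event $\{\tau_2>u\}$, the $\mathcal{F}_u$-conditional hazard rate
$$
\lambda_2(u)=\lim_{h\downarrow 0}\frac{1}{h}\,P(u<\tau_2\le u+h\mid\mathcal{F}_u),
$$
and then invoke the standard fact that, for a totally inaccessible $\tau_2$, this hazard rate is precisely the process that makes $1_{\{\tau_2\le t\}}-\int_0^{t\wedge\tau_2}\lambda_2(u)\,du$ an $\mathcal{F}_t$-martingale. Fix $u\in[t_j,t_{j+1})$. Since $\mathcal{F}_u$ contains only the observations ${\bf z}_{t_0},\ldots,{\bf z}_{t_j}$ together with the continuously observed default indicators on $[0,u]$, the strong Markov property of ${\bf Z}$ at $t_j$ reduces the computation to an initial-value problem starting from ${\bf z}_{t_j}$ at time $t_j$. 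Three mutually exclusive sub-events partition $\{\tau_2>u\}$, and each will produce one line of (\ref{intensity}).

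For Case~1 ($\tau_1>u,\tau_2>u$), I need both the survival probability $P(\tau>u-t_j\mid{\bf z}_{t_j})$ and the joint density $P(\tau_2\in du,\tau_1>u\mid{\bf z}_{t_j})/du$. For the former I plan to derive the Dirichlet transition density of ${\bf Z}$ on the wedge $\Omega_\alpha$: a Girsanov change of measure removes the drift ${\bf m}$, after which the driftless heat kernel on $\Omega_\alpha$ with absorbing boundary admits the classical eigenfunction expansion in the basis $\{\sin(n\pi\theta/\alpha)\}_{n\ge 1}$ with modified-Bessel radial factors, exactly as in the derivation of (\ref{function f})--(\ref{function b}). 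Re-attaching the exponential drift prefactor yields the function $h$ of (\ref{function h}); integrating over $\Omega_\alpha$ gives the denominator. For the numerator I observe that
$$
\frac{P(\tau_2\in du,\tau_1>u\mid{\bf z}_{t_j})}{du}\;=\;\pi({z}_{t_j}^{(2)},u-t_j)\;-\;\int_0^{u-t_j}g(s,u-t_j,{\bf z}_{t_j})\,ds,
$$
since $Z_2$ is itself a one-dimensional Brownian motion with drift $m_2$ whose unconditional hitting density is $\pi$, and the subtracted term accounts for the unwanted event $\{\tau_1<\tau_2=u-t_j\}$ by integrating the joint density $g$ of Theorem~\ref{IyengarMetzler} over $\tau_1\in(0,u-t_j)$.

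For Case~2, where $\tau_1=s\in[t_j,u)$ is observed and $\tau_2>u$, Theorem~\ref{IyengarMetzler} with time shifted to $t_j$ gives the joint law $P(\tau_1\in ds,\tau_2\in dt\mid{\bf z}_{t_j})=g(s-t_j,t-t_j,{\bf z}_{t_j})\,ds\,dt$, and the usual disintegration then yields the ratio in the second line of (\ref{intensity}). For Case~3, where $\tau_1<t_j$, the process $Z_2$ is a one-dimensional Brownian motion with drift $m_2$ independent of $Z_1$ (the second row of $\Sigma^{-1}$ involves only $X_2$); hence the information in $\mathcal{F}_u$ effectively collapses to $\sigma(z_{t_j}^{(2)})$ together with $\{\tau_2>u\}$, and the intensity reduces to the one-dimensional hitting density $\pi(z_{t_j}^{(2)},u-t_j)$ divided by its survival tail, matching the third line.

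The hardest step is the Case~1 derivation of the Dirichlet heat kernel (\ref{function h}) on $\Omega_\alpha$, where one must justify termwise manipulations of the Bessel series and verify that the Girsanov prefactor combines cleanly with the driftless kernel; the conditioning in Case~2 on the $P$-null event $\{\tau_1=s\}$ is a secondary point handled via regular conditional densities built from $g$. Once these ingredients are in place, the three cases combine into (\ref{intensity}) and the martingale property is the standard identification of the hazard rate with the compensator of a totally inaccessible stopping time.
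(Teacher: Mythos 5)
The overall structure you propose — decompose $\{\tau_2>u\}$ into the three sub-events according to what is known about $\tau_1$, use the Iyengar--Metzler joint law of $(\tau_1,\tau_2)$ and the survival probability on the wedge, and compute the local hazard rate as a limit — is the same as the paper's, and your computations of the numerators and denominators in each of the three cases are correct.

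The genuine gap is in the sentence where you ``invoke the standard fact that, for a totally inaccessible $\tau_2$, this hazard rate is precisely the process that makes $1_{\{\tau_2\le t\}}-\int_0^{t\wedge\tau_2}\lambda_2(u)\,du$ an $\mathcal{F}_t$-martingale.'' That is not a standard fact, and it fails in general: total inaccessibility of $\tau_2$ only guarantees that the compensator is \emph{continuous}, not that it is absolutely continuous, and even when an intensity exists the pointwise limit $\lim_{h\downarrow 0}h^{-1}P(u<\tau_2\le u+h\mid\mathcal{F}_u)$ need not coincide with it without further regularity. The paper closes this gap by invoking Aven's theorem (\cite{Aven}, Lemma \ref{aven}), which upgrades a pointwise limit $Y_n=E[N(t+h_n)-N(t)\mid\mathcal{F}_t]/h_n\to\lambda(t)$ to the compensator provided the $Y_n$ are dominated by a locally integrable process $y$. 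Verifying this domination is where most of the technical work lies: it requires the bound $\pi(x,h)\le\sqrt{m_2^2h+2}/\sqrt{2\pi h^2}$ on the one-dimensional hitting density, the geometric-tail bound $\sum_{n\ge1}n\,I_{n\pi/(2\alpha)}(z)\le Cz^{\pi/(2\alpha)}$ on the Bessel series (Lemma \ref{Bessellemma}), and the resulting pointwise bound $g(s,t)\le C(t-s)^{\pi/(2\alpha)-1}$ near $s=t$, all of which are also what justify the dominated-convergence passages you allude to when computing the limit itself (e.g.\ $\lim_n\int_0^u g(s,u+\eta_n)\,ds=\int_0^u g(s,u)\,ds$, Lemma \ref{gcontinuous}). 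You acknowledge that ``termwise manipulations of the Bessel series'' need justification, but you locate the hard part in rederiving the wedge heat kernel, which the paper simply cites; the real difficulty is the uniform domination needed for Aven's criterion. Without that, your argument produces a candidate for $\lambda_2$ but does not establish that $1_{\{\tau_2\le t\}}-\int_0^t\lambda_2$ is a martingale, which is the actual content of the theorem.

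A secondary remark: your plan to rederive $h$ from scratch via Girsanov plus an eigenfunction expansion on $\Omega_\alpha$ is more than is needed, since the paper obtains both $f$ (hence $g$) and $h$ directly from Iyengar and Metzler; this is a matter of economy rather than correctness. The third line of (\ref{intensity}) (the case $\tau_1<t_j$) is correctly reduced to a one-dimensional first-passage problem because $Z_2$ is an autonomous Brownian motion with drift $m_2$, exactly as the paper's passage from Theorem \ref{thm3} to Theorem \ref{thm4} handles it.
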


To derive the intensity process of default time $\tau_1$,
we perform the transformation $\tilde{\bf W}=\tilde{T}{\bf W}$,
$\tilde{\bf Z}=\tilde{T} \Sigma^{-1} {\bf X}$ and $\tilde{\bf m}=\tilde{T} \Sigma^{-1} \mbox{\boldmath$\mu$}$,
 where
$$
\tilde{T}=\left(
\begin{array}{cc}
-\rho & \sqrt{1-\rho^2}\\
\sqrt{1-\rho^2} & \rho
\end{array}
\right).
$$
Then we have
$$
d \tilde{\bf Z}=\tilde{\bf m} dt + d \tilde{\bf W}.
$$
The default times $\tau_i, i=1, 2$, are given by
{
$$
\left\{
\begin{array}{ll}
\tau_1 = \inf\{t> 0: \tilde{Z}_2(t) =0\}\\
\tau_2 = \inf\{t> 0: \tilde{Z}_1(t) =-\frac{\rho}{\sqrt{1-\rho^2}}\tilde{Z}_2(t)\}.
\end{array}
\right.
$$}
Following the same argument, we have
\begin{corollary}\label{cor1}
The intensity process of default time $\tau_1$ exists and is given by,
for $t_j \leq u< t_{j+1}$,
\begin{equation}\label{intensity1}
\begin{array}{lll}
\lambda_1(u) &=& \displaystyle 1_{\{\tau_1>u\}} \cdot 1_{\{\tau_2>u\}}
\left(\frac{\pi(\tilde{z}_{t_j}^{(2)},u-t_j)-\int_0^{u-t_j} \tilde{g}(s, u-t_j, \tilde{\bf z}_{t_j}) ds }{\int_{\Omega_{\alpha}} \tilde{h}(r, \theta, u-t_j, \tilde{\bf z}_{t_j}) dr d\theta}\right)\\
&&\displaystyle +1_{\{t_j \leq s < u\}}\cdot 1_{\{\tau_1 >u\}} \cdot 1_{\{\tau_2 =s\}} \left(\frac{ \tilde{g}(s-t_j, u-t_j, \tilde{\bf z}_{t_j})}{\int_{u-t_j}^{\infty} \tilde{g}(s-t_j, t, \tilde{\bf z}_{t_j})dt}\right)\\
&&\displaystyle +1_{\{\tau_1>u\}}\cdot 1_{\{\tau_2< t_j\}}\left(\frac{\pi(\tilde{z}_{t_j}^{(2)},u-t_j)}{\int_{u-t_j}^{\infty} \pi(\tilde{z}_{t_j}^{(2)},t) dt}\right),
\end{array}
\end{equation}
where $\tilde{g}$ and $\tilde{h}$ are the same
as $g$ and $h$, respectively, with ${\bf m}$ replaced by $\tilde{\bf m}$.
\end{corollary}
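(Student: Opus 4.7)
The plan is to reduce Corollary \ref{cor1} to a direct application of Theorem \ref{thm4} via the orthogonal change of coordinates $\tilde{T}$, chosen so that $\tau_1$ plays in the tilded system exactly the role that $\tau_2$ plays in the original one. First I would verify that $\tilde{T}\tilde{T}^{T}=I$, whence by L\'evy's characterization $\tilde{\bf W}=\tilde{T}{\bf W}$ is a standard two-dimensional Brownian motion, and $\tilde{\bf Z}=\tilde{T}\Sigma^{-1}{\bf X}$ satisfies $d\tilde{\bf Z}=\tilde{\bf m}\,dt+d\tilde{\bf W}$ with $\tilde{\bf m}=\tilde{T}\Sigma^{-1}\pmu$. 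Thus the tilded system has precisely the same structural form as the one analyzed in Theorems \ref{IyengarMetzler} and \ref{thm4}.

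Second, I would check that the two boundary conditions of the wedge are swapped under $\tilde{T}$. A direct computation gives
\[
\tilde{Z}_2=\sqrt{1-\rho^2}\,Z_1+\rho Z_2=\sqrt{1-\rho^2}\Bigl(Z_1+\tfrac{\rho}{\sqrt{1-\rho^2}}Z_2\Bigr),
\]
so that $\{\tilde{Z}_2=0\}$ is exactly the original default boundary for $\tau_1$, giving $\tau_1=\inf\{t>0:\tilde{Z}_2(t)=0\}$. A parallel calculation shows that $\tau_2$ becomes the hitting time of the oblique line $\tilde{Z}_1=-\tfrac{\rho}{\sqrt{1-\rho^2}}\tilde{Z}_2$. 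Because $\tilde{T}$ is orthogonal it preserves the angle between the two boundary rays, so the wedge $\Omega_\alpha$ and its opening angle $\alpha$ are unchanged; the initial point simply becomes $\tilde{\bf z}_{t_0}=\tilde{T}\Sigma^{-1}{\bf X}(0)$, whose polar representation lies in the interior of the same wedge.

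Third, I would observe that the observation filtration $\mathcal{F}_t$ is unaffected by this deterministic linear bijection: since $\tilde{T}\Sigma^{-1}$ is invertible, $\sigma(\{{\bf X}(t_j)\})=\sigma(\{\tilde{\bf Z}(t_j)\})$, and the default indicators $\mathbf{1}_{\{\tau_i\leq s\}}$ are coordinate-free. Hence computing the $\mathcal{F}_u$-intensity of $\tau_1$ in the tilded coordinates is literally the same conditional computation that Theorem \ref{thm4} carries out for $\tau_2$ in the original coordinates, with $\tilde{\bf z}_{t_j}$ and $\tilde{\bf m}$ replacing ${\bf z}_{t_j}$ and ${\bf m}$. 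Reading off the corresponding formula yields (\ref{intensity1}): the functions $g$ and $h$ become $\tilde{g}$ and $\tilde{h}$ (the same integrands with ${\bf m}\to\tilde{\bf m}$), while $\pi(\tilde{z}_{t_j}^{(2)},\cdot)$ is the hitting density of the coordinate $\tilde{Z}_2$ with its own drift $\tilde{m}_2$.

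The main obstacle is essentially bookkeeping: one must confirm that every ingredient of the formula transforms in the expected way, in particular that $\tilde{m}_2$ really is the drift of $\tilde{Z}_2$ appearing inside $\pi$, and that the radial--angular decomposition used in $h$ applies with the new polar coordinates $(\tilde{r}_{t_j},\tilde{\theta}_{t_j})$ of $\tilde{\bf z}_{t_j}$. Once these identifications are made, the orthogonality of $\tilde{T}$ ensures that no further analytic work is needed, and Corollary \ref{cor1} follows from Theorem \ref{thm4} by a verbatim transcription with tildes.
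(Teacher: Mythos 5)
Your proposal is correct and matches the paper's own treatment: the paper likewise introduces the reflection $\tilde{T}$, notes that it swaps the roles of $\tau_1$ and $\tau_2$ while preserving the wedge $\Omega_\alpha$, and then invokes Theorem \ref{thm4} ``following the same argument.'' Your added verifications (orthogonality of $\tilde{T}$, the identity $\tilde{Z}_2=\sqrt{1-\rho^2}\,Z_1+\rho Z_2$, invariance of the filtration, and that $\pi$ must carry the drift $\tilde{m}_2$) are all sound and merely make explicit what the paper leaves implicit.
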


\begin{remark}
Theorem \ref{thm4} and Corollary \ref{cor1} show that the operational status of one firm can have substantial impact
to the default of another firm.
The default of one firm may cause  a significant change of default rate of another firm
for a certain period and its impact decreases after new information is released.
Starting from a reduced-form intensity model for portfolio credit risk, this theorem also shows that it is
more reasonable to consider a contagion model, in which interacting intensities are time-varying with decaying default impacts.
\end{remark}

The  proof of Theorem \ref{thm4} is given in Section \ref{Proof}. We first prove the case with single observation, and then extend to multiple observations.
We show the existence of the local default rate and, with the help of Aven's conditions \cite{Aven}, establish this local default rate being the intensity process.
The proof relies on the hitting time distribution of correlated Brownian motions,  see Iyengar \cite{Iyengar} and Metzler \cite{Metzler}.
We also derive the conditional distributions of default times and asset values as a by-product (Lemma \ref{thm1}, \ref{thm2}).
\section{Valuation and Approximation}\label{ValuationApproximation}

\subsection{Valuation in Some Special Cases and Implications}\label{valuation}
{It is not easy to find the intensities $\lambda_1$ and $\lambda_2$}
as the results in (\ref{intensity}) and (\ref{intensity1})
are complicated and
In this section, we give the valuation of $\lambda_1$ and $\lambda_2$
in some special cases and give some insights about
the correlation structure under incomplete information.
Note that Iyengar  \cite{Iyengar} and Metzler  \cite{Metzler} derive
the expressions (\ref{function g}) and (\ref{function h}) by solving a PDE.
When the correlation $\rho$ of the two Brownian motions satisfies certain conditions, the expressions in (\ref{function g}) and (\ref{function h}) can be simplified.
Let
$$
\rho_k=-\cos \left(\frac{\pi}{k}\right)
\quad {\rm and } \quad
\alpha_k=\frac{\pi}{k}, \quad k= 1, \ldots
$$
For a fixed $k$, $T_j$ is the matrix representing the reflection across the line
$$
y= x \tan(j \alpha_k) \quad {\rm and} \quad S_j=T_j S_{j-1}, \ \ S_0=I.
$$
Then we have

\begin{theorem}\label{thm5} (Iyengar \cite{Iyengar})
If ${\bf m} = {\bf 0}$, $\alpha=\alpha_k$ and ${\bf z} \in \Omega_{\alpha}$,
then
$$
P_{\{{\bf m} = {\bf 0}\}}({\bf Z}(t) \in d{\bf z}, \tau > t)
= \frac{1}{t} \sum_{j=0}^{2k-1}(-1)^j \Psi \left(\frac{z_0 -S_j {\bf z}}{\sqrt{t}} \right)d{\bf z},
$$
where $\Psi({\bf x})=(2\pi)^{-1} \exp(\frac{-{\bf x}^T {\bf x}}{2})$.
\end{theorem}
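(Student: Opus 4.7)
The plan is to prove Theorem \ref{thm5} by the classical method of images for two-dimensional Brownian motion killed on the boundary of the wedge $\Omega_{\alpha_k}$. Writing $q(t,{\bf z}_0,{\bf z})$ for the sub-probability transition density associated to ${\bf Z}(t)1_{\{\tau>t\}}$ under ${\bf m}={\bf 0}$, the task is to show
\begin{equation*}
q(t,{\bf z}_0,{\bf z})=\frac{1}{t}\sum_{j=0}^{2k-1}(-1)^j\Psi\!\left(\frac{{\bf z}_0-S_j{\bf z}}{\sqrt{t}}\right).
\end{equation*}
By standard PDE theory, $q$ is the unique bounded classical solution of the heat equation $\partial_t q=\tfrac12\Delta_{\bf z} q$ on $(0,\infty)\times\Omega_{\alpha_k}$ with Dirichlet boundary values $q(t,{\bf z}_0,\cdot)=0$ on $\partial\Omega_{\alpha_k}$ and initial data $q(0^+,{\bf z}_0,\cdot)=\delta_{{\bf z}_0}$ inside the wedge. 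The goal is therefore to show the right-hand side above has these three properties; uniqueness then does the rest.

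First I would verify the PDE: each summand is a rescaled free Gaussian in the variable ${\bf z}$ (since $S_j$ is orthogonal, the composition preserves $\Delta$), so termwise the heat equation is satisfied, and the finite sum inherits it. Next I would check the initial condition: as $t\downarrow 0$, the $j=0$ term ($S_0=I$) contracts to $\delta_{{\bf z}_0}$ on $\Omega_{\alpha_k}$, while for $j\geq1$ the image point $S_j^{-1}{\bf z}_0$ lies outside the open wedge because the reflections $T_j$ send $\Omega_{\alpha_k}$ onto adjacent sectors of the tiling; hence those terms vanish on $\Omega_{\alpha_k}$ in the sense of distributions.

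The main step, and the place I expect the real work, is the Dirichlet boundary condition on the two rays $\{\theta=0\}$ and $\{\theta=\alpha_k\}$. The idea is that $\alpha_k=\pi/k$ is exactly the angle for which the two reflections $T_0$ (across $\{\theta=0\}$) and the reflection $R$ across $\{\theta=\alpha_k\}$ generate the dihedral group $D_k$ of order $2k$. I would argue by induction on $j$ that the sequence $S_0,S_1,\ldots,S_{2k-1}$ defined by $S_j=T_jS_{j-1}$ enumerates this dihedral group exactly once, alternating between rotations and reflections, so that $(-1)^j$ matches the parity of $S_j$ as a product of reflections. Given this, composition with $T_0$ on the right (restricting ${\bf z}$ to the ray $\{\theta=0\}$, where $T_0{\bf z}={\bf z}$) permutes the $S_j$'s by a fixed-point-free involution that flips parities, so the signed sum cancels pointwise; the same argument with $R$ handles the other ray. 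The rotational invariance of $\Psi$, combined with $|({\bf z}_0-S_j{\bf z})|=|S_j^{-1}{\bf z}_0-{\bf z}|$, is what lets the cancellation happen when we move $S_j$ from ${\bf z}$ to ${\bf z}_0$.

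Finally, once the three properties are checked, uniqueness of the Dirichlet heat kernel on the wedge yields the identity for $q$. Integrating $q(t,{\bf z}_0,{\bf z})\,d{\bf z}$ against the event $\{\tau>t\}$ and recognising it as $P_{\{{\bf m}={\bf 0}\}}({\bf Z}(t)\in d{\bf z},\tau>t)$ gives the stated formula. The hardest part will be making the dihedral-group bookkeeping clean—specifically, writing down the involution on $\{0,1,\ldots,2k-1\}$ induced by each boundary reflection and checking that consecutive application of $T_j$'s cycles through $D_k$ in the correct order so that the $(-1)^j$ signs line up.
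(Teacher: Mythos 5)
The paper states Theorem~\ref{thm5} with no proof, attributing it to Iyengar~\cite{Iyengar}; there is therefore no in-paper argument to compare your proposal against, and it must be judged on its own.

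Your method-of-images sketch is the classical route to this formula and is essentially correct. The three verification steps are the right ones: each summand is a free heat kernel evaluated at an orthogonally transformed argument, so the heat equation in ${\bf z}$ holds termwise; writing $\Psi\bigl((z_0-S_j{\bf z})/\sqrt{t}\bigr)=\Psi\bigl((S_j^{-1}z_0-{\bf z})/\sqrt{t}\bigr)$ shows the $j\ge 1$ terms concentrate at points $S_j^{-1}z_0$ lying in sectors of the $D_k$-tiling disjoint from $\Omega_{\alpha_k}$, so only $j=0$ survives in the initial-data limit; and the Dirichlet condition follows because right-composition with the boundary reflection fixing a given ray is a fixed-point-free, determinant-flipping bijection of $D_k$. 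The bookkeeping you flag as "the hardest part" is actually shorter than you suggest: since $S_j$ is a product of $j$ reflections, $\det S_j=(-1)^j$ automatically, and the only nontrivial facts needed are (i) that $\{S_j\}_{j=0}^{2k-1}$ enumerates all $2k$ elements of $D_k$ without repetition (which follows because $T_{j'}\cdots T_{j+1}$ is a rotation by $(j'-j)\alpha_k$ when $j'-j$ is even and a reflection when odd, hence never the identity for $0<j'-j<2k$), and (ii) that the sum $\sum_{g\in D_k}\det(g)\,\Psi\bigl((z_0-g{\bf z})/\sqrt t\bigr)$ is manifestly invariant under the pairing $g\mapsto gT_0$ (resp.\ $g\mapsto gT_1$) up to an overall sign on the ray $\theta=0$ (resp.\ $\theta=\alpha_k$), hence vanishes there. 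With these two observations in place the proof is complete; no further gap.
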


As claimed in Iyengar \cite{Iyengar}
and verified in Blanchet-Scalliet and Patras  \cite{BS},
$$
P_{\{{\bf m} = {\bf 0}\}}( {\bf Z}(t)\in d{\bf z}, \tau \in dt)
=\frac{1}{2}\frac{\partial}{\partial {\bf n}}P_{\{{\bf m} = {\bf 0}\}}({\bf Z}(t) \in d{\bf z}, \tau > t)
$$
where $\frac{\partial}{\partial {\bf n}}$ denotes the derivative in the direction of inward normal to the boundary $\partial \Omega_{\alpha}$
at point ${\bf z}$.
This simplifies the computation of $b(r, t, {\bf z}_{0})$ in (\ref{function b}) as follows:
\begin{equation}\label{bb}
\begin{array}{lll}
b(r, t, {\bf z}_{0})&=& \displaystyle \frac{1}{2t}\sum_{j=0}^{2k-1} (-1)^j
\frac{\partial}{\partial {\bf n}} \Psi \left(\frac{{\bf z}_0 -S_j {\bf z}}{\sqrt{t}} \right)\\
&=&\displaystyle  \frac{1}{2t}\sum_{j=0}^{2k-1} (-1)^j \frac{\partial}{\partial {\bf n}^{\prime}} \Psi \left(\frac{\tilde{\bf z}_0 -S_j {\bf z}}{\sqrt{t}} \right)\\
&=& \displaystyle \frac{1}{2t^2}\sum_{j=0}^{2k-1} (-1)^j \Psi \left(\frac{\tilde{\bf z}_0 -S_j \tilde{\bf z}}{\sqrt{t}} \right)({\tilde{\bf z}_0}^T S_j {\bf e}_2),
\end{array}
\end{equation}
where ${\bf n}=(\sin\alpha, -\cos \alpha)$, ${\bf n}^{\prime}=(0, 1)$, $\tilde{\bf  z}_0=(r_0 \cos\tilde{\theta}_0, r_0 \sin\tilde{\theta}_0)$ with $\tilde{\theta}_0=\alpha-\theta_0$ and $\tilde{\bf z}=(r,0)$.
We can also get a simple expression for h, defined in (\ref{function h}), as
\begin{equation}\label{hh}
h(r, \theta, t, {\bf z}_0)
=\exp\left({\bf m}^T [(r \cos \theta, r \sin \theta)^T - {\bf z}_0]-\frac{|{\bf m}|^2t}{2}\right)
\frac{r}{t} \sum_{j=0}^{2k-1}(-1)^j \Psi \left(\frac{{\bf z}_0 -S_j {\bf z}}{\sqrt{t}} \right).
\end{equation}
By combining Eq.s (\ref{intensity}), (\ref{intensity1}), (\ref{bb}), and (\ref{hh}), we conduct a numerical study to investigate,
in the special cases ($\alpha=\alpha_k$),
the interacting intensities of the two firms.

\begin{remark}
We have $\rho_k \rightarrow -1$ as $k \rightarrow \infty$, which implies that when $\rho$ is sufficiently close to $-1$, one can approximate
the intensity process by that of $\rho_k$, where $\rho_k \leq \rho < \rho_{k+1}$.
\end{remark}

We next give two numerical examples to illustrate the infection effect of the two firms.
The data used are
$
{\bf x}_0=(9, 10)^T, \ \mbox{\boldmath$\mu$}=(2,3)^T, \ \sigma_1=4, \ {\rm and}  \ \sigma_2=5.
$

Figure 1 presents intensity process $\lambda_2$ of default time $\tau_2$ of firm B in independent case ($\rho=0$), negative correlated case ($\rho=-0.5$) and highly negative correlated case ($\rho=-0.7$),
where the time length between two observations ($\Delta t$) is $10$ (years) and the default time of firm A is $\tau_1=2$.
Note that when $\rho=0$, $\lambda_2$ is not affected by the default event of firm A, which is consistent with our expectation as firm A and B are independent of each other.
The default of firm A has a substantial impact on
the intensity process $\lambda_2$ when $\rho \neq 0$, which drops
sharply as the two firms are negative correlated.
The default of firm A has a even more significant
impact on firm B when they are highly negative correlated.
The figure also reveals a fact that before a default is observed,
the intensity process $\lambda_2$ in the correlated case is nearly the
same as that in the independent case.

\begin{figure}
	\centering
		{\includegraphics[scale=0.6]{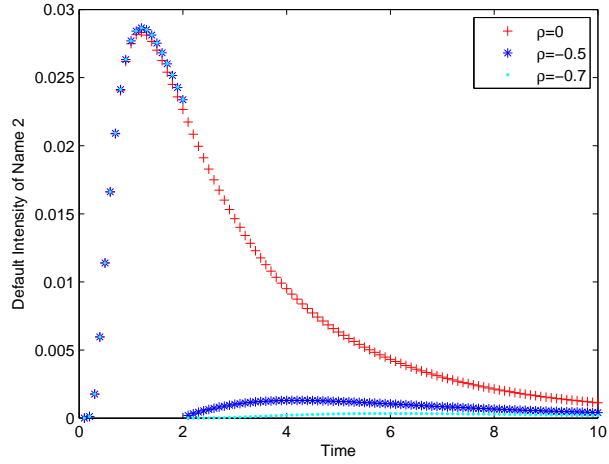}}
	\label{fig:intent234a}
	\caption{Default Intensity process $\lambda_2$ when $\tau_1=2$}\label{figure1}
\end{figure}

Figure 2 presents $\lambda_2$ with varying default times of firm A with the same data and $\rho=-0.5$. We observe that as two firms are negative correlated, $\lambda_2$ experiences a sharp drop at the default time of firm A, after which $\lambda_2$ increases slightly as the default impact decreases.
In the long run, $\lambda_2$ tends to zero as the drift is positive.

\begin{figure}\label{figure3}
	\centering
		{\includegraphics[scale=0.6]{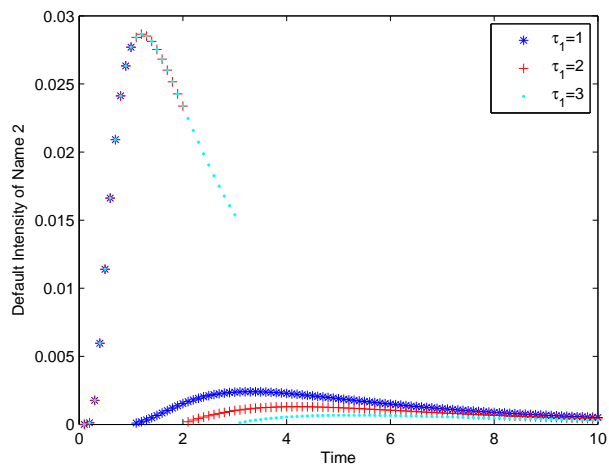}}
	\label{fig:intent234}
	\caption{Default Intensities of firm B where $\rho=-0.5$}
\end{figure}

\subsection{Approximation in General Cases}\label{Approxiamtion}
We now focus on the interval $[t_0, t_1]$.
To simplify the notations, we write $g(s, t, {\bf z}_{0})$ as $g(s, t)$
and $h(r, \theta, t, {\bf z}_{0})$ as $g(r, \theta, t)$ in the rest of the paper.
The evaluation of the intensity of each firm by taking the integration of $g(s, t)$ and $h(r, \theta, t)$
{ incurs a large computational cost.}
We employ
the numerical method in Kou and Zhong  \cite{Kou}, Rogers and Shepp  \cite{Rogers}
to evaluate the Laplace transform
of $(\tau_1, \tau_2)$ and $\tau$.
Using the inverse Laplace transform, we can find the distributions of $(\tau_1, \tau_2)$ and $\tau$.
The terms in the expression of $\lambda_2(t)$ in Theorem \ref{thm3}
can be expressed as:
$$
\left\{
\begin{array}{l}
\displaystyle
\int_0^u g(s, u) ds = \frac{d P(\tau_1 \leq s, \tau_2 \leq t)}{dt}\left|_{(s,t)=(u, u)}\right.\\
\displaystyle \int_u^{\infty} g(s, v) dv = \frac{d P(\tau_1 \leq s, \tau_2 > u)}{ds},\\
\displaystyle g(s, u) = \frac{d^2 P(\tau_1 \leq s, \tau_2 \leq t)}{ ds dt},\\
\displaystyle  \int_{\Omega_{\alpha}} h(r, \theta, u) dr d\theta = P(\tau > u).\\
\end{array}
\right.
$$
The Laplace transform of the joint distribution of $(\tau_1, \tau_2)$
is given by
\begin{equation}\label{laplace}
E[e^{-p_1\tau_1-p_2\tau_2} \mid {\bf X}(0)=(x_1, x_2)].
\end{equation}
The solution to the following PDE
$$
\mathcal{L}(u)(x_1, x_2)=(p_1+p_2)u
$$
where
$$
\mathcal{L}=\frac{1}{2}\sigma_1^2 \frac{\partial^2}{\partial x_1^2}+\rho \sigma_1 \sigma_2\frac{\partial^2}{\partial x_1 \partial x_2}+\frac{1}{2}\sigma_2^2 \frac{\partial^2}{\partial x_2^2}+\mu_1\frac{\partial}{\partial x_1}+\mu_2\frac{\partial}{\partial x_2}
$$
if exists, has the representation given in Eq. (\ref{laplace}).
In their innovative papers \cite{Kou, Rogers}, Kontorovich-Lebedev transform
and finite Fourier transform
are proposed to solve the PDE for (\ref{laplace}).
We use these methods to solve the PDE and find the joint distribution numerically.
We can then approximate the intensity process for general $\rho$.

Figure 3 presents the approximate and the exact intensity process $\lambda_2$ in independent case.
Data used are
$
{\bf x}_0=(1, 1)^T, \ \mbox{\boldmath$\mu$}=(0.1, -0.2)^T, \ \sigma_1=1.2,
\ {\rm and} \ \sigma_2=0.5.
$
Numerical tests show that the approximation method
can give a good approximation of the default intensity process.

\begin{figure}\label{figure4}
	\centering
		{\includegraphics[scale=0.6]{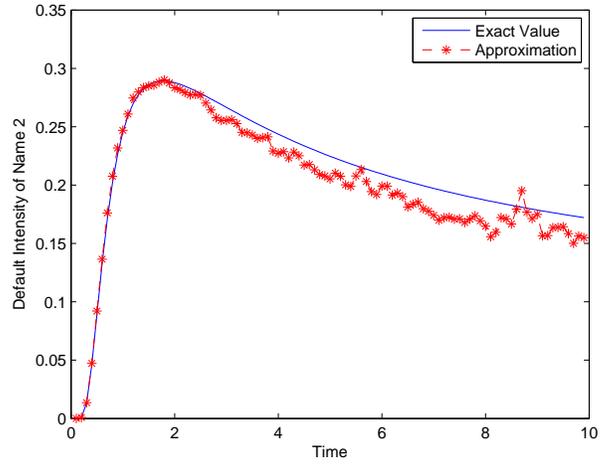}}\label{c1}
	\caption{Default Intensity Process $\lambda_2$ when $\rho=0$}
\end{figure}

To give examples of the default intensity in general cases, we adopt the same parameters and assume the correlated parameter $\rho$
equals $0.1$  and $-0.1$.
From Figures 5 and 6, we can observe a sharp upward jump in $\lambda_2$ when firm A defaults at time $\tau_1=2$ if $\rho=0.1$ and a sharp downward jump if $\rho=-0.1$.
This is consistent with our intuition.

\begin{figure}\label{figure5}
	\centering
		{\includegraphics[scale=0.6]{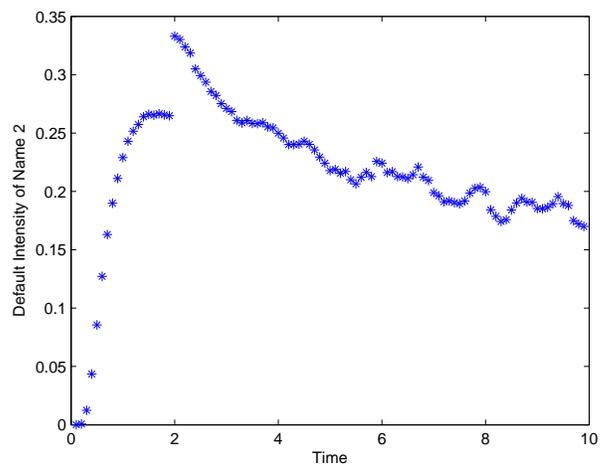}}
	\caption{Default Intensity Process $\lambda_2$ when $\tau_1=2$ and $\rho=0.1$}
\end{figure}

\begin{figure}\label{figure6}
	\centering
		{\includegraphics[scale=0.6]{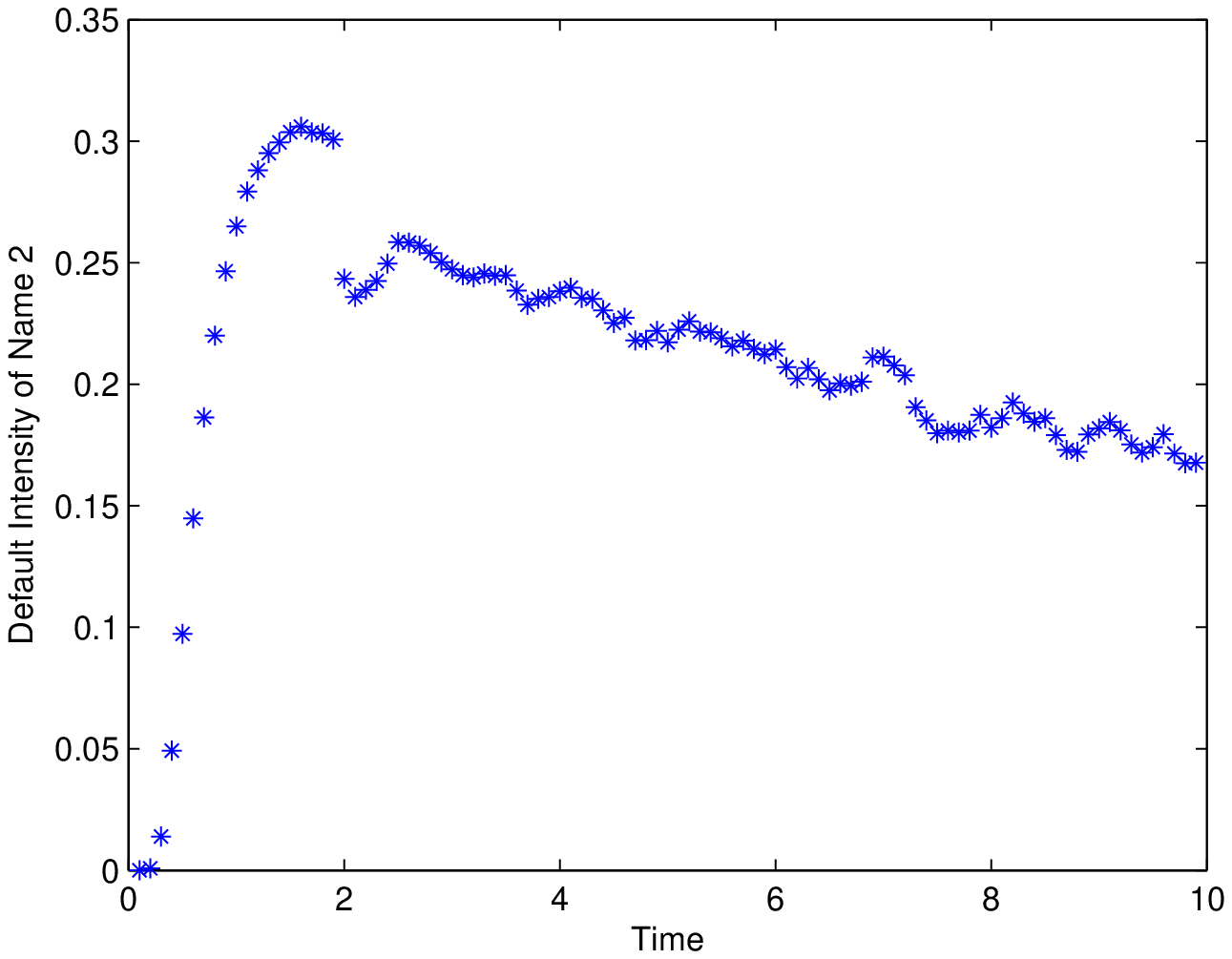}}
	\caption{Default Intensity Process $\lambda_2$ when $\tau_1=2$ and $\rho=-0.1$}
\end{figure}

{
\section{Proof of Theorem \ref{thm4}}\label{Proof}
\subsection{Preliminary Results}\label{Preliminary}

The objective of this subsection is to
obtain the conditional distributions of $\tau_i$ and $Z_i$ ($i=1, 2$)
given $\mathcal{F}_u$.
For simplicity, we assume that $t_0<u<t_1$.
We only need to derive the conditional distribution of $\tau_2$ given $\mathcal{F}_u$,
that of $\tau_1$ can be derived easily with a transformation.
The derivation relies on the results given in  Theorem \ref{IyengarMetzler}.

We now  give the conditional default time distribution given $\mathcal{F}_u$.
\begin{lemma}\label{thm1}
For $0<s< u<v$, the conditional distribution of $\tau_2$ is given by,
\begin{equation}
\begin{array}{ll}
1_{\{\tau_2 > u\}} P(\tau_2 \in dv \mid \mathcal{F}_u)=& 1_{\{\tau_2 > u\}} \cdot 1_{\{\tau_1 > u\}}
\displaystyle \left(\frac{\pi(z_0^{(2)}, v)-\int_0^u g(s,v) ds}{\int_{\Omega_{\alpha}} h(r, \theta, u) dr d\theta} dv\right)\\
&+1_{\{\tau_2 > u\}} \cdot 1_{\{\tau_1 =s\}} \displaystyle \left(\frac{g(s, v)}{\int_u^{\infty} g(s, t) dt} dv\right),
\end{array}
\end{equation}
where $z_0^{(2)}$ is the second entry of ${\bf z}_0$ and
$h(r,\theta, u)$ is given in Eq.(\ref{h}).
\end{lemma}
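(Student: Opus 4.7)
\textbf{Proof proposal for Lemma \ref{thm1}.}

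Since $t_0=0<u<t_1$, no firm asset value is reported strictly between $t_0$ and $u$, so the observation part of the filtration collapses to $\sigma({\bf Z}(0))$. Hence $\mathcal{F}_u = \sigma({\bf z}_0, \tau_1\wedge u, \tau_2\wedge u)$. On the event $\{\tau_2>u\}$ this $\sigma$-algebra distinguishes exactly two mutually exclusive scenarios: either $\tau_1>u$ (both firms alive at $u$), or $\tau_1=s$ for some $s\in(0,u]$ (firm 1's default time is revealed exactly by the indicator process). The regular conditional distribution of $\tau_2$ therefore decomposes into two pieces corresponding to the two terms on the right-hand side, and the plan is to compute each as a Bayes ratio of joint densities built from Theorem \ref{IyengarMetzler}.

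On $\{\tau>u\}=\{\tau_1>u,\tau_2>u\}$ I would write
$$
P(\tau_2\in dv\mid \mathcal{F}_u) \;=\; \frac{P(\tau_2\in dv,\,\tau_1>u\mid {\bf z}_0)}{P(\tau>u\mid {\bf z}_0)}.
$$
Because $v>u$, the event $\{\tau_2\in dv\}$ is contained in $\{\tau_2>u\}$, so the numerator equals the unconditional density of $\tau_2$ minus the contribution from $\tau_1\leq u$. Remark \ref{piFunction} identifies the first as $\pi(z_0^{(2)},v)\,dv$ (hitting time of a 1D drifted Brownian motion with drift $m_2$ starting from $z_0^{(2)}$), and Theorem \ref{IyengarMetzler} gives $P(\tau_1\in ds,\tau_2\in dv)=g(s,v)\,ds\,dv$, so the subtracted piece is $\int_0^u g(s,v)\,ds$. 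The denominator is the survival probability $P(\tau>u)=\int_{\Omega_\alpha} h(r,\theta,u,{\bf z}_0)\,dr\,d\theta$, where $h$ is the sub-density of ${\bf Z}(u)$ on $\{\tau>u\}$. This reproduces the first term in the statement.

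On $\{\tau_1=s,\tau_2>u\}$ for some $s\leq u$, the same Bayes argument gives
$$
P(\tau_2\in dv\mid \mathcal{F}_u)\;=\;\frac{P(\tau_1\in ds,\tau_2\in dv)}{P(\tau_1\in ds,\tau_2>u)}\;=\;\frac{g(s,v)\,dv}{\int_u^\infty g(s,t)\,dt},
$$
where again $v>u\geq s$ makes the marginal event $\{\tau_2\in dv\}$ compatible with $\tau_2>u$, and both joint laws are read off Theorem \ref{IyengarMetzler}. This is the second term. Summing the two contributions and inserting the indicators that select which scenario is realized yields the claim.

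The main obstacle is justifying the density formula $P({\bf Z}(u)\in d{\bf z},\,\tau>u)=h(r,\theta,u,{\bf z}_0)\,dr\,d\theta$ used in the denominator of the first piece, since Theorem \ref{IyengarMetzler} supplies only the joint law of $({\bf Z}(\tau),\tau)$. The intended derivation is: apply Girsanov to remove the drift ${\bf m}$ (producing the prefactor $\exp({\bf m}^T({\bf z}-{\bf z}_0)-|{\bf m}|^2u/2)$), then use the classical eigenfunction expansion of the transition density of planar Brownian motion absorbed on $\partial\Omega_\alpha$ in polar coordinates; separation of variables gives Dirichlet eigenfunctions $\sin(n\pi\theta/\alpha)$ on $(0,\alpha)$ for the angular part and modified Bessel functions $I_{n\pi/\alpha}(rr_0/t)$ for the radial part, yielding exactly expression (\ref{function h}). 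Once $h$ is identified, the rest of the proof is routine use of Bayes' rule and Theorem \ref{IyengarMetzler}.
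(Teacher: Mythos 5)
Your proposal is correct and follows essentially the same route as the paper's proof: the same two-case decomposition of $\mathcal{F}_u$ on $\{\tau_2>u\}$, the same Bayes ratios, with the numerator of the first term written as $\pi(z_0^{(2)},v)-\int_0^u g(s,v)\,ds$ and the denominator identified as $P(\tau>u)=\int_{\Omega_\alpha}h\,dr\,d\theta$. The only difference is that you sketch a Girsanov-plus-eigenfunction derivation of the sub-density $h$, whereas the paper simply cites Iyengar and Metzler for that formula.
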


\begin{proof}
By Bayes' rule, for $s< u<v$, the conditional distribution of $\tau_2$ is given by
\begin{equation}\label{cp}
\begin{array}{ll}
1_{\{\tau_2 > u\}} \cdot P(\tau_2 \in dv \mid \mathcal{F}_u)=&
1_{\{\tau_2 > u\}} \cdot 1_{\{\tau_1 > u\}} \cdot P(\tau_2 \in dv \mid \tau > u) \\
&+1_{\{\tau_2 > u\}} \cdot 1_{\{\tau_1 =s\}} \cdot P(\tau_2 \in dv \mid \tau_2 > u, \tau_1=s).
\end{array}
\end{equation}
For the first term of the RHS of  Eq. (\ref{cp}), we have
$$
\begin{array}{lll}
P(\tau_2 \in dv \mid \tau >u) &=& \displaystyle \frac{P(\tau_2 \in dv, \tau_1 > u)}{P(\tau >u)} \\
&=& \displaystyle \frac{P(\tau_2 \in dv)-P(\tau_2 \in dv, \tau_1 \leq u)}{P(\tau >u)} \\
&=& \displaystyle \frac{\pi(z_0^{(2)}, v)-\int_0^u g(s,v) ds}{P(\tau >u)} dv.
\end{array}
$$
From Iyengar \cite{Iyengar} and Metzler \cite{Metzler},
we also have
$$
P(\tau >u)= \int_{\Omega_{\alpha}} h(r, \theta, u) dr d\theta,
$$
where
\begin{equation}\label{h}
\begin{array}{lll}
h(r, \theta, t)
&=&\displaystyle \frac{2r}{t \alpha}\exp \left(m_1(r\cos \theta-z_0^{(1)})+m_2(r\sin \theta-
z_0^{(2)})-\frac{|{\bf m}|^2t}{2}\right)  \exp \left(-\frac{r^2+r_0^2}{2t}\right) \times \\
& &\displaystyle \sum_{n=1}^{\infty} \sin \left(\frac{n \pi \theta}{\alpha} \right)\sin \left( \frac{n \pi \theta_0}{\alpha}\right)
I_{\frac{n \pi}{\alpha}}\left(\frac{r r_0}{t}\right).
\end{array}
\end{equation}
Note that the second term of the RHS of Eq. (\ref{cp}) is
$$
P(\tau_2 \in dv \mid  \tau_2 > u, \tau_1=s) = \displaystyle \frac{P(\tau_2 \in dv, \tau_1 \in ds)}{P(\tau_2 >u, \tau_1\in ds)}
= \displaystyle \frac{g(s, v)}{\int_u^{\infty} g(s, t) dt} dv.
$$
The result follows.
\end{proof}

We also present the conditional distribution of $Z_i(u), i=1,2$ given the filtration $\mathcal{F}_u$.
Again we give the conditional distribution of $Z_2(u)$,
while the conditional distribution of $Z_1(u)$ can be derived similarly.
\begin{lemma}\label{thm2}
For $0< s<u$,
\begin{equation}
\begin{array}{lll}
1_{\{\tau_2>u\}}P(Z_2(u) \in dx \mid \mathcal{F}_u)&=&1_{\{\tau_2>u\}}\cdot 1_{\{\tau_1 >u\}}
\displaystyle \left(\frac{\tilde{\pi}(x,z^{(2)}_0, u)-p(x,u)}{\int_{\Omega_{\alpha}} h(r, \theta, u) dr d\theta} dx\right)\\
&&+1_{\{\tau_2>u\}} \cdot 1_{\{\tau_1 =s\}} \displaystyle \left(\frac{l(s,u,x)}{\int_u^{\infty} g(s, t) dt} dx\right),
\end{array}
\end{equation}
where
$$
\left\{
\begin{array}{lll}
l(s,t,x)&=&\displaystyle \int_0^{\infty} f(r, s) \tilde{\pi}(x, r \sin \alpha, u-s) dr,\\
p(x, t)&=&\displaystyle \int_0^t l(s,t,x) ds,\\
\tilde{\pi}(x, x_0, h)&=& \displaystyle \frac{1}{\sqrt{2 \pi h}} \exp\left(-\frac{(x-x_0-m_2 h)^2}{2h}\right)\left(1-\exp(-\frac{2 x_0 x}{h})\right).
\end{array}
\right.
$$
\end{lemma}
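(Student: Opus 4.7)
The plan is to mirror the argument of Lemma \ref{thm1}: apply Bayes' rule to split the conditional density $1_{\{\tau_2>u\}}P(Z_2(u)\in dx\mid \mathcal{F}_u)$ into the piece on $\{\tau_1>u\}$ and the piece on $\{\tau_1=s\}$, then evaluate the two resulting unconditional joint densities directly. Since the information in $\mathcal{F}_u$ on $\{t_0<u<t_1\}$ amounts to either the event $\{\tau>u\}$ (if neither firm has defaulted) or the event $\{\tau_1=s,\,\tau_2>u\}$ (if firm~1 has defaulted at $s<u$), this gives the required decomposition; the task is to identify the numerators of the two conditional laws, the denominators being supplied by Theorem~\ref{IyengarMetzler} and Lemma~\ref{thm1}.

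For the term on $\{\tau_1>u\}$, I would write
\[
P(Z_2(u)\in dx,\tau>u)=P(Z_2(u)\in dx,\tau_2>u)-P(Z_2(u)\in dx,\tau_2>u,\tau_1\le u).
\]
The first piece is the classical density of a Brownian motion with drift $m_2$ starting at $z_0^{(2)}$ constrained to stay positive up to time $u$, which by the reflection principle equals $\tilde{\pi}(x,z_0^{(2)},u)\,dx$. The second piece is obtained by conditioning on $\tau_1\in ds$ and $\mathbf{Z}(\tau_1)=(r\cos\alpha,r\sin\alpha)$: Theorem~\ref{IyengarMetzler} gives the joint density $f(r,s)\,dr\,ds$ for this pair, and the strong Markov property applied at $\tau_1=s$ shows that conditional on $Z_2(s)=r\sin\alpha$ the process $Z_2$ continues as a drifted Brownian motion, so the probability of $\{Z_2(u)\in dx,\tau_2>u\}$ restricted to non-absorption on $[s,u]$ is $\tilde{\pi}(x,r\sin\alpha,u-s)\,dx$. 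Integrating out $r$ yields $l(s,u,x)$ and integrating out $s\in[0,u]$ yields $p(x,u)$. Dividing the difference $\tilde{\pi}(x,z_0^{(2)},u)-p(x,u)$ by $P(\tau>u)=\int_{\Omega_\alpha}h(r,\theta,u)\,dr\,d\theta$ gives the first summand of the lemma.

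For the term on $\{\tau_1=s\}$ I compute
\[
P(Z_2(u)\in dx\mid \tau_2>u,\tau_1=s)=\frac{P(Z_2(u)\in dx,\tau_2>u,\tau_1\in ds)}{P(\tau_2>u,\tau_1\in ds)}.
\]
The numerator is exactly $l(s,u,x)\,ds\,dx$ by the same strong-Markov computation used above, while the denominator is $\bigl(\int_u^\infty g(s,t)\,dt\bigr)\,ds$ by Theorem~\ref{IyengarMetzler}, giving the second summand. The main technical point to verify carefully is the strong Markov property step: once $\tau_1$ occurs, the residual law of $Z_2$ on $[s,u]$ is that of a drifted Brownian motion starting at $r\sin\alpha$, so the event $\{\tau_2>u\}$ restricted to this interval produces exactly the reflection-principle factor $\tilde{\pi}(x,r\sin\alpha,u-s)$. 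Everything else is bookkeeping with the densities supplied by Theorem~\ref{IyengarMetzler}; no new analytic input beyond the reflection-principle identity for $\tilde{\pi}$ is required.
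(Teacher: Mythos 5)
Your proposal is correct and follows essentially the same route as the paper: the same Bayes' decomposition over $\{\tau_1>u\}$ versus $\{\tau_1=s\}$, the same identification of $P(\tau_2>u,Z_2(u)\in dx)=\tilde\pi(x,z_0^{(2)},u)\,dx$ via the reflection principle, and the same use of the strong Markov property at $\tau_1$ together with the density $f(r,s)$ from Theorem~\ref{IyengarMetzler} to produce $l(s,u,x)$ and $p(x,u)$. The only stylistic difference is that you make the strong Markov step explicit where the paper writes it directly as $P(\tau_2>u,Z_2(u)\in dx\mid Z_2(s)=r\sin\alpha)$.
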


\begin{proof}
{
By Bayes' rule, one can obtain
\begin{equation}\label{Z}
\begin{array}{ll}
1_{\{\tau_2>u\}}P(Z_2(u) \in dx \mid \mathcal{F}_u)=&1_{\{\tau_2>u\}} \cdot 1_{\{\tau_1 >u\}}P(Z_2(u) \in dx \mid \tau >u)\\
&+1_{\{\tau_2>u\}}\cdot 1_{\{\tau_1 =s\}} P(Z_2(u) \in dx \mid \tau_2 >u, \tau_1=s).
\end{array}
\end{equation}
We note that according to the density function of the first passage time
and {a straightforward calculation
(see for instance Chapter 1 of Harrison \cite{Harrison}),}
the probability of a Brownian motion with drift $m_2$, conditioning on starting at some level $x_0$ at time $0$,
ends at level $x$ at time $h$ with running minimum being positive is
$$
\tilde{\pi}(x, x_0, h)= \displaystyle \frac{1}{\sqrt{2 \pi h}} \exp\left(-\frac{(x-x_0-m_2 h)^2}{2h}\right)\left(1-\exp(-\frac{2 x_0 x}{h})\right).
$$
For the second term of Eq. (\ref{Z}),
$$
 P(Z_2(u) \in dx \mid \tau_2 >u, \tau_1=s) = \displaystyle \frac{P(\tau_2>u, Z_2(u) \in dx, \tau_1 \in ds)}{P(\tau_1 \in ds, \tau_2 >u)},
$$
and
$$
\begin{array}{lll}
P(\tau_2>u, Z_2(u) \in dx, \tau_1 \in ds)&=&\displaystyle \int_0^{\infty} f(r,s) P(\tau_2>u, Z_2(u) \in dx \mid Z_2(s)=r \sin \alpha) dr ds\\
&=&\displaystyle \int_0^{\infty} f(r,s) \tilde{\pi}(x, r\sin\alpha, u-s) dr dx ds\\
&=&l(s,u,x)ds dx.
\end{array}
$$
For the first term of Eq. (\ref{Z}),
$$
\begin{array}{lll}
P(Z_2(u) \in dx \mid \tau >u) &=&\displaystyle \frac{P(\tau>u, Z_2(u) \in dx)}{P(\tau >u)}\\
&=&\displaystyle \frac{P(\tau_2>u, Z_2(u) \in dx)-P(\tau_1 \leq u, \tau_2 >u, Z_2(u)\in dx)}{P(\tau >u)},
\end{array}
$$
where
$$
P(\tau_2>u, Z_2(u) \in dx) =\tilde{\pi}(x, z^{(2)}_0,u) dx,
$$
and
$$
P(\tau_1 \leq u, \tau_2 >u, Z_2(u)\in dx)=\int_0^{u}l(s, u,x) ds dx=p(x,u) dx.
$$
The the result follows.}
\end{proof}

Given the survival to $u$,
the above theorem gives us the conditional distribution of assets,
because the conditional density of ${\bf V}(t)$ can be easily obtained from the
conditional density of ${\bf Z}(t)$.

\subsection{Existence and Explicit Form of $\lambda_i$}\label{ExistenceLambda}
To prove Theorem {\ref{thm4}}, we first begin with single observations, i.e. $t_0 < u <t_1$, and then extend to multiple observations.
The intuitive meaning of the intensity is given by the local default rate,
\begin{equation}\label{lim}
\lim_{\delta u \rightarrow 0} \frac{1}{\delta u} P(\tau_i \in (u, u+\delta u] \mid \mathcal{F}_u).
\end{equation}}
To obtain the intensity process $\{\lambda_i(u)\}_{u \geq 0}$,
we need the following lemmas.
\begin{lemma}\label{gcontinuous}
For a fixed $s$, $g(s, .)$ is continuous in $(s, +\infty)$.
\end{lemma}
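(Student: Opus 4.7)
\textbf{Proof plan for Lemma \ref{gcontinuous}.}
The plan is to apply the dominated convergence theorem to the defining integral
\[
g(s, t, {\bf z}_{t_0}) = \int_0^\infty f(r, s, {\bf z}_{t_0})\, \pi(r \sin\alpha,\, t - s)\, dr,
\]
treating $s$ as a fixed parameter and letting $t$ vary in a compact neighborhood of an arbitrary reference point $t^{*} \in (s, +\infty)$. Pointwise continuity in $t$ of the integrand is immediate: for each $r > 0$, the map $h \mapsto \pi(r \sin\alpha, h)$ is smooth on $(0, +\infty)$ by the explicit formula (\ref{function pi}), so all the work lies in producing a $t$-independent integrable majorant.

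Concretely, I would fix $\delta \in (0, (t^{*}-s)/2)$ and restrict attention to $t \in [t^{*}-\delta,\, t^{*}+\delta]$, so that $h := t-s$ ranges over $[h_-, h_+] := [t^{*}-s-\delta,\, t^{*}-s+\delta] \subset (0,+\infty)$. Expanding the square in (\ref{function pi}) gives
\[
\pi(r\sin\alpha, h) = \frac{r\sin\alpha}{\sqrt{2\pi h^3}}\, \exp\!\left(-\frac{r^2\sin^2\alpha}{2h} - m_2 r \sin\alpha - \frac{m_2^2 h}{2}\right),
\]
which for $h \in [h_-, h_+]$ is dominated by the $t$-independent function
\[
\Phi(r) := \frac{r\sin\alpha}{\sqrt{2\pi\, h_-^3}}\, \exp\!\left(-\frac{r^2 \sin^2\alpha}{2\, h_+} + |m_2|\, r \sin\alpha\right),
\]
obtained by taking the worst value of $h$ in each factor separately. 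Hence on the chosen neighborhood the integrand is bounded pointwise by $F(r) := f(r, s, {\bf z}_{t_0})\,\Phi(r)$.

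The next step is to check $\int_0^\infty F(r)\,dr < \infty$. By (\ref{function f})--(\ref{function b}), $f(r, s, {\bf z}_{t_0})$ carries the Gaussian factor $\exp(-r^2/(2s))$ together with an exponential-linear-in-$r$ prefactor coming from $\exp({\bf m}^T(r\cos\alpha, r\sin\alpha)^T)$; multiplying by the extra factor $r \exp(-c r^2 + c' r)$ from $\Phi(r)$, with $c = \sin^2\alpha/(2 h_+) > 0$ and $c' = |m_2|\sin\alpha$, preserves Gaussian tail decay, so the contribution for large $r$ is integrable. Near $r=0$ the $1/r$ prefactor in $b$ is cancelled by the small-argument asymptotics $I_\nu(z) \sim (z/2)^\nu / \Gamma(\nu+1)$ of the leading ($n=1$) Bessel term, while the same estimate ensures uniform convergence of the Bessel series on compact $r$-subsets of $[0,\infty)$, giving local integrability of $F$. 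Dominated convergence then yields $g(s, t, {\bf z}_{t_0}) \to g(s, t^{*}, {\bf z}_{t_0})$ as $t \to t^{*}$, proving continuity on $(s, +\infty)$.

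The main obstacle is purely bookkeeping: one must track the Bessel series in $b(r, s, {\bf z}_{t_0})$ carefully enough to legitimate the pointwise bound and to confirm that $f(\cdot, s, {\bf z}_{t_0}) \in L^1_{\mathrm{loc}}(0,\infty)$ with the Gaussian tail claimed above. All other steps reduce to standard Gaussian-type estimates.
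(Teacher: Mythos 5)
Your proposal is correct and follows essentially the same route as the paper: both prove continuity by applying the dominated convergence theorem to the defining integral $g(s,t,{\bf z}_{t_0})=\int_0^\infty f(r,s,{\bf z}_{t_0})\pi(r\sin\alpha,t-s)\,dr$ with $t$ confined to a neighbourhood of the reference point. The only difference is the choice of majorant: the paper bounds $\pi(x,h)$ uniformly in $x$ by $\sqrt{m_2^2h+2}/\sqrt{2\pi h^2}$ (via Remark \ref{piFunction}), so the dominating function is a constant multiple of $f(\cdot,s,{\bf z}_{t_0})$ and integrability is immediate from $\int_0^\infty f(r,s,{\bf z}_{t_0})\,dr<\infty$; your $\Phi(r)$ is in any case a bounded function of $r$, so the Bessel-series bookkeeping you flag as the main obstacle can be bypassed by the same one-line observation.
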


\begin{proof}
By the following elementary inequalities
$$
\sqrt{a+b} \pm \sqrt{a}\leq \sqrt{2(2a+b)}, \ \ a>0, \ \  b>0,
$$
and Remark \ref{piFunction}, for a fixed $h>0$, $\pi(.,h)$ achieves its local maximum at
$$
x=\frac{1}{2}\left(\sqrt{m^2_2 h^2+4h}-m_2h\right).
$$
{We have}
\begin{equation}\label{pi}
\pi(x, h)=\frac{x}{\sqrt{2 \pi h^3}} \exp\left(-\frac{(x+m_2h)^2}{2h}\right) \le
\frac{\frac{1}{2}\left(\sqrt{m^2_2 h^2+4h}-m_2h\right)}{\sqrt{2 \pi h^3}}
\leq \frac{\sqrt{m^2_2h+2}}{\sqrt{2 \pi  h^2}}.
\end{equation}
For any $t >s$ and $n \in \mathcal{N}^+$,
$$
g\left(s, t+\frac{1}{n}\right)
=\int_0^{\infty} f(r, s) \pi\left(r \sin \alpha, t+\frac{1}{n}-s\right) dr,
$$
where
$$
\begin{array}{lll}
f(r, s) \pi(r \sin \alpha, t+\frac{1}{n}-s)
&\leq&
\displaystyle f(r,s)\cdot \frac{\sqrt{m^2_2 (t+\frac{1}{n}-s)+2}}{\sqrt{2 \pi  (t+\frac{1}{n}-s)^2}}\\
&\leq& \displaystyle f(r,s) \cdot \frac{\sqrt{m^2_2 (t+1-s)+2}}{\sqrt{2 \pi (t-s)^2}}.
\end{array}
$$
Since $f$ gives the joint density of $(\tau, {\bf Z}(\tau))$,
we have
$$
\int_0^{\infty} f(r, s) dr < \infty.
$$
Using the Dominated Convergence Theorem (DCT) and
the fact that $\pi$ is continuous in $(0, \infty)\times(0, \infty)$, we have
$$
\begin{array}{lll}
\displaystyle \lim_{n \rightarrow \infty}
g\left(s, t+\frac{1}{n}\right) &=& \displaystyle \int_0^{\infty}  \lim_{n \rightarrow \infty} f(r,s)\pi\left(r \sin \alpha, t+\frac{1}{n}-s\right) dr\\
&=& \int_0^{\infty}f(r,s)\pi(r \sin \alpha, t-s) dr\\
&=& g(s, t).
\end{array}
$$
\end{proof}

\begin{lemma}\label{Bessellemma}
{
For modified Bessel function of the first kind of order $v$, $I_v$
(c.f. Eq. ({\ref{Bessel}) ), we have for $z \in (0,1)$,
$$
\sum_{n=1}^{\infty} n  I_{\frac{n \pi}{2 \alpha}}(z) \leq C z^{\frac{\pi}{2 \alpha}},
$$
where we denote by $C$ a generic positive constant, not necessarily the same at each appearance in the rest of the paper.

}}
\end{lemma}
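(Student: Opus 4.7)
The plan is to extract the leading-order behavior of $I_v(z)$ for small $z$ from its series expansion and then show that the remainder of the sum in $n$ contributes only a constant multiple of the $n=1$ term.

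First I would write, using the defining series (\ref{Bessel}),
\[
I_v(z) \;=\; \Bigl(\tfrac{z}{2}\Bigr)^{v}\sum_{k=0}^{\infty}\frac{(z/2)^{2k}}{k!\,\Gamma(v+k+1)}.
\]
Next I would use the elementary factorial bound $\Gamma(v+k+1)=(v+1)(v+2)\cdots(v+k)\,\Gamma(v+1)\ge k!\,\Gamma(v+1)$, valid for every $v\ge 0$, which converts the inner sum into a dominated series:
\[
\sum_{k=0}^{\infty}\frac{(z/2)^{2k}}{k!\,\Gamma(v+k+1)}
\;\le\;\frac{1}{\Gamma(v+1)}\sum_{k=0}^{\infty}\frac{(z/2)^{2k}}{(k!)^{2}}.
\]
Since $z\in(0,1)$, the right-hand series is bounded by an absolute constant (for example by $e^{1/4}$ using $(z/2)^{2k}\le 4^{-k}$). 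This yields the uniform bound
\[
I_{v}(z)\;\le\;\frac{C_{0}\,(z/2)^{v}}{\Gamma(v+1)},\qquad v\ge 0,\ z\in(0,1),
\]
for a universal constant $C_{0}$.

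Then I would specialise to $v=n\pi/(2\alpha)$ and write, using $z\in(0,1)$ so that $z^{n\pi/(2\alpha)}\le z^{\pi/(2\alpha)}$ for all $n\ge 1$,
\[
\sum_{n=1}^{\infty}n\,I_{n\pi/(2\alpha)}(z)
\;\le\; C_{0}\,z^{\pi/(2\alpha)}\sum_{n=1}^{\infty}\frac{n}{2^{\,n\pi/(2\alpha)}\,\Gamma\!\bigl(n\pi/(2\alpha)+1\bigr)}.
\]
The residual series in $n$ is independent of $z$ and converges trivially by the ratio test (the $\Gamma$-factor grows super-exponentially thanks to Stirling, and the $2^{-n\pi/(2\alpha)}$ factor alone would already suffice). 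Absorbing its sum into the constant produces the claimed estimate $\sum_{n=1}^{\infty}n\,I_{n\pi/(2\alpha)}(z)\le C\,z^{\pi/(2\alpha)}$.

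There is no real obstacle here; the only thing that requires a little care is keeping the bound on the inner $k$-series \emph{uniform in} $v$, which is exactly what the inequality $\Gamma(v+k+1)\ge k!\,\Gamma(v+1)$ buys us. Once that is in place the result is a straightforward summation, and the constant $C$ depends only on $\alpha$ through the gamma and geometric factors in the residual series.
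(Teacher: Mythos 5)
Your proof is correct and follows essentially the same route as the paper's: both arguments extract the leading factor $(z/2)^{v}$ from the series for $I_v(z)$ with a bound uniform in $v$ and $z\in(0,1)$, and then sum the resulting rapidly decaying series over $n$. Your retention of the $\Gamma(v+1)$ factor and the use of $z^{n\pi/(2\alpha)}\le z^{\pi/(2\alpha)}$ are only cosmetic refinements of the paper's bound $I_v(z)\le e\,(z/2)^{v}$ followed by the closed form $\sum_{n\ge 1}n x^{n}=x/(1-x)^{2}$.
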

\begin{proof}
For any positive real number $v$,
$$
I_v(z)=\sum_{k=0}^{\infty}\frac{(\frac{1}{2}z)^{2k+v}}{k!  \Gamma(v+k+1)} \leq \sum_{k=0}^{\infty}\frac{(\frac{1}{2}z)^{2k+v}}{k!} = e^{\frac{z^2}{4}} \left(\frac{1}{2}z\right)^{v}  <e \left(\frac{1}{2}z\right)^{v}.
$$
Therefore,
$$
\sum_{n=1}^{\infty} n I_{\frac{n \pi}{2 \alpha}}(z) < e \sum_{n=1}^{\infty} n\left(\frac{1}{2} z\right)^{\frac{n\pi}{2 \alpha}}\leq \frac{e(1/2)^{\frac{\pi}{2 \alpha}} z^{^{\frac{\pi}{2 \alpha}}}}{[1-(z/2){^{\frac{\pi}{2 \alpha}}}]^2} \leq \frac{e(1/2)^{\frac{\pi}{2 \alpha}} z^{^{\frac{\pi}{2 \alpha}}}}{[1-(1/2){^{\frac{\pi}{2 \alpha}}}]^2}=C z^{\frac{\pi}{2 \alpha}}.
$$
\end{proof}

The next result characterizes the local default rate.
\begin{lemma}\label{localRate}
For $0<s<u$ and $\{u_n\}$ a sequence decreasing to $0$, we have
$$
\begin{array}{ll}
& \displaystyle \lim_{n \rightarrow \infty}
\frac{1_{\{\tau_2 >u\}}}{u_n}P(\tau_2 \in (u, u+u_n] \mid \mathcal{F}_u)\\
=&
\displaystyle 1_{\{\tau_2>u\}}\cdot 1_{\{\tau_1>u\}} \left(\frac{\pi(z_0^{(2)},u)-\int_0^u g(s, u) ds }{\int_{\Omega_{\alpha}} h(r, \theta, u) dr d\theta}\right)
+1_{\{\tau_2 >u\}}\cdot 1_{\{\tau_1 =s\}} \left(\frac{ g(s, u)}{\int_u^{\infty} g(s, v)dv}\right).
\end{array}
$$
\end{lemma}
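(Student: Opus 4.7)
The plan is to read the conditional density of $\tau_2$ given $\mathcal{F}_u$ directly from Lemma \ref{thm1} and recognize the local default rate as its right-limit at $v = u$. Writing $\phi(v)$ for the conditional density on the right-hand side of Lemma \ref{thm1}, we have
$$\frac{1_{\{\tau_2>u\}}}{u_n} P(\tau_2 \in (u, u+u_n] \mid \mathcal{F}_u) = \frac{1}{u_n}\int_u^{u+u_n} \phi(v)\, dv,$$
so if $\phi$ is right-continuous at $v = u$, standard calculus gives the limit $\phi(u^+)$, which coincides with the claimed expression. Thus everything reduces to showing right-continuity of $\phi$ at $v = u$.

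The density $\phi$ decomposes on the disjoint events $\{\tau_1 > u\}$ and $\{\tau_1 = s\}$ (for a fixed observed value $s < u$). On $\{\tau_1 = s\}$, the $v$-dependence of $\phi$ sits entirely in the numerator $g(s, v)$, whose continuity at $v = u$ is exactly Lemma \ref{gcontinuous}. On $\{\tau_1 > u\}$, the denominator $\int_{\Omega_\alpha} h(r, \theta, u)\, dr\, d\theta$ is constant in $v$, and the numerator is $\pi(z_0^{(2)}, v) - \int_0^u g(s, v)\, ds$; continuity of $\pi(z_0^{(2)}, \cdot)$ is immediate from formula (\ref{function pi}), so the remaining task is to verify that
$$v \;\longmapsto\; \int_0^u g(s, v)\, ds$$
is right-continuous at $v = u$.

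For this I would invoke the Dominated Convergence Theorem. Pointwise convergence $g(s, v) \to g(s, u)$ as $v \downarrow u$ for each $s \in (0, u)$ is Lemma \ref{gcontinuous}. A dominating function valid for all $v \in [u, u+\delta]$ is obtained by estimating $g(s, v) = \int_0^\infty f(r, s)\, \pi(r \sin \alpha, v-s)\, dr$ as follows: bound $\pi(\cdot, v-s)$ by the explicit estimate (\ref{pi}); bound the Bessel series inside $b(r, s)$ in (\ref{function b}) by splitting into the regimes $rr_0/s \le 1$ (where Lemma \ref{Bessellemma} applies) and $rr_0/s > 1$ (where classical asymptotics $I_v(z) \lesssim e^z/\sqrt{z}$ together with the Gaussian factor $\exp(-(r^2+r_0^2)/(2s))$ give decay in $r$); and absorb the exponential in (\ref{function f}) into these bounds. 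A careful estimate shows the resulting majorant is integrable over $(0,\infty) \times (0,u)$.

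The main obstacle is this dominating-function construction. The Bessel series is delicate because $\sin(n\pi(\alpha-\theta_0)/\alpha)$ need not have a fixed sign, and $I_{n\pi/\alpha}(rr_0/s)$ has very different behavior for small and large argument; the factor $1/(sr)$ in $b$ together with $\pi(r\sin\alpha, v-s)$ introduces a mild singularity as $s \uparrow v$ that must be controlled uniformly in $v$ on $[u, u+\delta]$. Once a uniform integrable bound is produced, DCT yields right-continuity of $\int_0^u g(s, v)\, ds$ at $v = u$, and combining the two parts delivers the claimed limit.
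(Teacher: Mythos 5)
Your overall strategy is the same as the paper's: decompose the conditional probability via Lemma \ref{thm1}, dispose of the $\{\tau_1=s\}$ term with Lemma \ref{gcontinuous}, and reduce the $\{\tau_1>u\}$ term to showing $\int_0^u g(s,v)\,ds \to \int_0^u g(s,u)\,ds$ as $v\downarrow u$ by dominated convergence. You correctly identify that everything hinges on producing an integrable majorant for $g(s,\cdot)$ on $(0,u)$, uniform in $v$ near $u$. But you then stop exactly there, calling the construction ``the main obstacle'' and asserting that ``a careful estimate shows the resulting majorant is integrable.'' That construction is the entire technical content of the lemma, and the route you sketch for it is not obviously workable: you propose splitting the Bessel series according to whether $rr_0/s$ is small or large and invoking $I_\nu(z)\lesssim e^z/\sqrt{z}$ in the large-argument regime, but that asymptotic is not uniform in the order $\nu=n\pi/\alpha$, and you must sum $\sum_n n\,I_{n\pi/\alpha}(rr_0/s)$ over $n$ before integrating in $r$; controlling that sum for large argument requires order-uniform bounds and a separate argument that the alternating factor $\sin(n\pi(\alpha-\theta_0)/\alpha)$ can simply be bounded by $1$. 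You also need, and do not supply, the uniform control of the $(v-s)^{-3/2}$ singularity of $\pi(r\sin\alpha,v-s)$ as $s\uparrow u$ with $v\downarrow u$.

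For comparison, the paper's construction splits on $s$ rather than on the Bessel argument: for $s\in(0,u^*]$ with $u^*<u$ fixed, the elementary bound $\pi(x,h)\le \sqrt{m_2^2h+2}/\sqrt{2\pi h^2}$ (Eq.~(\ref{pi})) plus $\int_0^\infty f(r,s)\,dr<\infty$ already gives an integrable bound, since $v-s\ge u-u^*$ is bounded away from zero there. For $s\in(u^*,u)$, rather than bounding the series pointwise in $r$, the paper integrates out $r$ first using the identity $\int_0^\infty e^{-bt^2}I_\nu(at)\,dt=\tfrac12\sqrt{\pi/b}\,e^{a^2/8b}I_{\nu/2}(a^2/8b)$, which collapses the whole $r$-integral into a single Bessel series evaluated at the argument $\frac{r_0^2}{2s}\cdot\frac{v-s}{v-s\cos 2\alpha}$; choosing $u^*$ and $\delta$ so that this argument lies in $(0,1)$ makes Lemma \ref{Bessellemma} applicable and yields the majorant $C(v-s)^{\pi/(2\alpha)-1}$, which is integrable in $s$ near $u$ and can be made independent of $v\in[u,u+\delta]$ by the case split on $\alpha\lessgtr\pi/2$. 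Unless you carry out an estimate of comparable strength, your proof is incomplete at its decisive step.
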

\begin{proof}
{
By Theorem \ref{thm1}, for $s<u$, we have
\begin{equation}\label{nn}
\begin{array}{lll}
1_{\{\tau_2 > u\}}\cdot P(\tau_2 \in (u, u+u_n] \mid \mathcal{F}_u)=& \displaystyle
1_{\{\tau_2>u\}} \cdot 1_{\{\tau_1>u\}} \cdot
\left( \frac{\int_u^{u+u_n}\pi(z_0^{(2)},v)dv-\int_u^{u+u_n}\int_0^u g(s, v) ds dv}{\int_{\Omega_{\alpha}} h(r, \theta, u) dr d\theta}\right)\\
&\displaystyle + 1_{\{\tau_2 >u\}} \cdot 1_{\{\tau_1 =s\}}\cdot \left(\frac{\int_u^{u+u_n} g(s, v) dv}{\int_u^{\infty} g(s, t)dt}\right).
\end{array}
\end{equation}
For the first term of the RHS of Eq. (\ref{nn}), we have
\begin{equation}\label{contin}
\lim_{n \rightarrow \infty}\frac{1}{u_n} \int_u^{u+u_n}\int_0^u g(s, v) ds dv = \int_0^u g(s, u) ds.
\end{equation}

{Eq. (\ref{contin}) can be proved as follows.}
Since $g$ is non-negative a.s.,
Fubini's Theorem tells us that
\begin{equation}
\begin{array}{lll}
\displaystyle \frac{1}{u_n}\int_u^{u+u_n}\int_0^u g(s, v) ds dv
&=&\displaystyle \int_0^{u}\int_0^{\infty} f(r,s)\frac{1}{u_n}
\left(\int_u^{u+u_n}\pi(r \sin \alpha, v-s) dv\right) dr ds\\
&=&\displaystyle \int_0^{u}\int_0^{\infty} f(r,s)\pi(r \sin \alpha, u+\eta_n-s)dr ds\\
&=&\displaystyle \int_0^u g(s, u+\eta_n) ds,
\end{array}
\end{equation}
where $0< \eta_n< u_n$.
Let $u^* \in (0, u)$ and $0 < s<u<t$.
For $s \in (0, u^*)$, { by Eq. (\ref{pi})}
{
\begin{equation}\label{firin}
g(s, t)  \leq  \frac{\sqrt{m^2_2(t-s)+2}}{\sqrt{2 \pi} (t- s)}\int_0^{\infty} f(r,s) dr
 \leq \frac{\sqrt{m^2_2t+2}}{\sqrt{2 \pi} (u- u^*)}\int_0^{\infty} f(r,s) dr.
\end{equation}
We can select $u^*, \eta^*$ to be sufficiently close to $u$, such that for $t \in (u, \eta^*)$ and $s \in [u^*, u)$
$$
\frac{r_0^2}{2s}\left(\frac{t-s}{t-s\cos 2\alpha }\right) \in (0, 1).
$$
This can be done since
$$
t-s\cos 2\alpha = t-s+2s\sin^2 \alpha \geq 2u^* \sin^2 \alpha >0.
$$
Then we claim that
\begin{equation}\label{ineq}
\begin{array}{lll}
g(s,t)&\leq & \displaystyle \frac{C}{\sqrt{s}\sqrt{t-s\cos 2\alpha}(t-s)}\exp\left(\frac{|{\bf m}|^2s}{2}\right)
\sum_{n=1}^{\infty} n \sin \left(\frac{n \pi \tilde{\theta}_0}{\alpha}\right) I_{\frac{n \pi}{2\alpha}}\left(\frac{r_0^2}{2s}\frac{t-s}{t-s\cos 2\alpha }\right)\\
& \leq & \displaystyle \frac{C}{\sqrt{s}\sqrt{t-s\cos 2\alpha}(t-s)}\exp\left(\frac{|{\bf m}|^2s}{2}\right)
\left(\frac{r_0^2}{2s}\frac{t-s}{t-s\cos 2\alpha}\right)^{\frac{\pi}{2 \alpha}} ({\rm by \ Lemma \ \ref{Bessellemma}})\\
& \leq & \displaystyle \frac{C (t-s)^{\frac{\pi}{2 \alpha}-1}}{[s(t-s \cos 2\alpha)]^{\frac{1}{2}+\frac{\pi}{2 \alpha}}} \exp\left(\frac{|{\bf m}|^2s}{2}\right)\\
& \leq & \displaystyle \frac{C}{{u^*}^{1+\frac{\pi}{ \alpha}} }\exp\left(\frac{|{\bf m}|^2u}{2}\right) (t-s)^{\frac{\pi}{2 \alpha}-1}.
\end{array}
\end{equation}
The first inequality in the RHS of Eq. (\ref{ineq}) holds by first considering
$$
\begin{array}{lll}
&&f(r,s)\pi(r\sin \alpha, t-s)\\
& \leq & b(r,s)\exp\left(m_1(r\cos \alpha-z^{(1)}_0)+m_2(r \sin \alpha-z^{(2)}_0)-\frac{|{\bf m}|^2s}{2}\right) \times\\
&& \displaystyle \frac{r \sin \alpha}{\sqrt{2 \pi (t-s)^3}} \exp \left( -\frac{r^2 \sin^2 \alpha}{2(t-s)}-m_2 r\sin \alpha \right)\\
& = & \displaystyle b(r,s) \frac{r \sin \alpha}{\sqrt{2 \pi (t-s)^3}} \exp\left(-\frac{r^2 \sin^2 \alpha}{2(t-s)}\right)\exp \left( m_1 r\cos \alpha\right) \exp\left(-m_1 z^{(1)}_0-m_2 z^{(2)}_0-\frac{|{\bf m}|^2s}{2}\right)\\
& \leq &\displaystyle  b(r,s)\frac{r \sin \alpha}{\sqrt{2 \pi (t-s)^3}}\exp\left( -\frac{r^2 \sin^2 \alpha}{2(t-s)}\right) \times\\
&& \displaystyle \exp\left(\frac{r^2}{4s}+ s(m_1 \cos \alpha)^2\right)\exp\left(-m_1 z^{(1)}_0-m_2 z^{(2)}_0-\frac{|{\bf m}|^2s}{2}\right)\\
& \leq &\displaystyle  b(r,s)\frac{r \sin \alpha}{\sqrt{2 \pi (t-s)^3}}\exp\left( -\frac{r^2 \sin^2 \alpha}{2(t-s)}+\frac{r^2}{4s}\right) \exp\left(-m_1 z^{(1)}_0-m_2 z^{(2)}_0 + \frac{|{\bf m}|^2s}{2}\right)\\
&=&\displaystyle  \frac{\sqrt{\pi} \sin \alpha}{\sqrt{2(t-s)^3} \alpha^2 s}\exp\left(-\frac{r^2_0}{2s}\right)\exp\left(-m_1 z^{(1)}_0-m_2 z^{(2)}_0+
\frac{|{\bf m}|^2s}{2}\right)\\
&& \times \displaystyle  \exp\left( -\frac{r^2}{4s}\frac{t-s \cos 2 \alpha}{t-s}\right) \sum_{n=1}^{\infty} n \sin \left(\frac{n \pi \tilde{\theta}_0}{\alpha}\right) I_{\frac{n \pi}{\alpha}}(\frac{rr_0}{s})\\
&=& \displaystyle  \frac{C}{\sqrt{(t-s)^3} s}\exp\left(-\frac{r^2_0}{2s}+\frac{|{\bf m}|^2s}{2}\right)
 \displaystyle  \exp\left( -\frac{r^2}{4s}\frac{t-s \cos 2 \alpha}{t-s}\right) \sum_{n=1}^{\infty} n \sin \left(\frac{n \pi \tilde{\theta}_0}{\alpha}\right) I_{\frac{n \pi}{\alpha}}(\frac{rr_0}{s}).
\end{array}
$$
By using the following identity (Abramowitz and Stegun  \cite{AS}),
$$
\int_0^{\infty} e^{-b t^2} I_v(at) dt =\frac{1}{2} \sqrt{\frac{\pi}{b}}\exp\left(\frac{a^2}{8b}\right)I_{\frac{v}{2}}\left(\frac{a^2}{8b}\right),
$$
$$
\begin{array}{lll}
g(s,t) &=&\displaystyle \int_0^{\infty} f(r,s)\pi(r\sin \alpha, t-s) dr\\
 &\leq &\displaystyle \frac{C}{\sqrt{s}\sqrt{t-s\cos 2\alpha}(t-s)}\exp\left(\frac{|{\bf m}|^2s}{2}\right) \exp\left(-\frac{r^2_0 \sin^2 \alpha}{t-s \cos 2\alpha}\right)\times \\
&& \displaystyle \sum_{n=1}^{\infty} n \sin \left(\frac{n \pi \tilde{\theta}_0}{\alpha} \right)
I_{\frac{n \pi}{2\alpha}}\left(\frac{r_0^2}{2s}\frac{t-s}{t-s\cos 2\alpha }\right)\\
 &\leq &\displaystyle \frac{C}{\sqrt{s}\sqrt{t-s\cos 2\alpha}(t-s)}\exp\left(\frac{|{\bf m}|^2s}{2}\right)
\sum_{n=1}^{\infty} n \sin \left(\frac{n \pi \tilde{\theta}_0}{\alpha} \right)
I_{\frac{n \pi}{2\alpha}}\left(\frac{r_0^2}{2s}\frac{t-s}{t-s\cos 2\alpha }\right).
\end{array}
$$
Combining Eq. (\ref{firin}) and Eq. (\ref{ineq}),
there exists $N_0 \in \mathcal{N}^+$, such that for $n > N_0$,
$$
g(s, u+\eta_n) \leq f(u, u^*, s),
$$
where $f(u, u^*, s)=$
$$
\left\{
\begin{array}{ll}
\frac{\sqrt{m^2_2(u+u_1)+2}}{\sqrt{2 \pi} (u- u^*)}\int_0^{\infty} f(r,s) dr,& s \in (0, u^*],\\
\frac{C}{{u^*}^{1+\frac{\pi}{\alpha}} }\exp\left(\frac{|{\bf m}|^2u}{2}\right) \left(1_{\{\alpha \leq \pi/2\}}(u+u_1-u^*)^{\frac{\pi}{2 \alpha}-1}+ 1_{\{\alpha > \pi/2\}}(u-s)^{\frac{\pi}{2 \alpha}-1}\right), & s \in (u^*, u),
\end{array}\right.
$$
is integrable on $(0,u)$ as a function of $s$.}
The DCT implies,
$$
\lim_{n \rightarrow \infty}\int_0^u g(s, u+\eta_n) ds =\int_0^u \lim_{n \rightarrow \infty}g(s, u+\eta_n) ds=\int_0^u g(s, u) ds.
$$
The result follows.
Similarly $\pi$ is continuous in $(0, \infty)\times (0, \infty)$ we have
$$
\frac{1}{u_n}\int_u^{u+u_n}\pi(z_0^{(2)}, v)dv =\pi(z_0^{(2)}, u+\xi_n) \rightarrow \pi(z_0^{(2)}, u) \quad {\rm as} \quad n \rightarrow \infty,
$$
where $0<\xi_n < u_n$.

For the second term of the RHS of Eq. (\ref{nn}),
by Lemma \ref{gcontinuous}, when $n \rightarrow \infty$,
$$
\frac{1}{u_n}\int_u^{u+u_n} g(s, v) dv = g(s, u+\psi_n) \rightarrow g(s, u),
$$
where $0 < \psi_n < u_n$.}
\end{proof}

{Before proving the main result of the subsection, we introduce the following result which is needed in the proof.}
\begin{lemma}[Aven \cite{Aven}]\label{aven}
Let $N(t)$ be a counting process adapted to $\{\mathcal{F}_t\}$ and assume there exists a process $\{\lambda(t)\}$ such that
$
\lambda(t)=\lim_{h_n \downarrow 0} Y_n
$
where $Y_n=E[N(t+h_n)-N(t) \mid \mathcal{F}_t]/h_n$,
if the following conditions hold with $\{\lambda(t)\}$ and $\{y(t)\}$ non-negative and measurable processes:\\
\begin{itemize}
\item for each $t$,  $\lim_{n \rightarrow \infty}Y_n=\lambda(t)$,\\
\item for each $t$ there exists for almost all $\omega$ and $n_0=n_0(t, \omega)$, such that
$|Y_n(s, \omega)-\lambda(s, \omega)|\leq y(s, \omega), s\leq t, n \geq n_0$,\\
\item $\int_{0}^t y(s) ds < \infty, 0 \leq t <\infty$.\\
\end{itemize}
Then $\{N(t)-\int_0^t \lambda(s) ds\}$ is an $\mathcal{F}_t$ martingale.
\end{lemma}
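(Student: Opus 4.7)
My plan is to verify directly that $M(t) := N(t) - \int_0^t \lambda(u)\, du$ satisfies $E[M(t) - M(s) \mid \mathcal{F}_s] = 0$ for every $0 \le s < t$, which, together with adaptedness and integrability, gives the $\mathcal{F}_t$-martingale property. The strategy is to expand $N(t) - N(s)$ as a telescoping sum along a partition of $[s,t]$ of mesh $h_n$, replace each increment by its $\mathcal{F}_{t_k}$-conditional expectation via the tower property (thereby bringing in the quantities $Y_n$ by hypothesis), and then pass to the limit $n \to \infty$ using the domination supplied by conditions (ii) and (iii).

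Concretely, I would first fix $s < t$ and a sequence $h_n \downarrow 0$ with $h_n$ dividing $t - s$, set $t_k := s + k h_n$, and use the tower property together with the defining identity $E[N(u+h_n) - N(u) \mid \mathcal{F}_u] = h_n Y_n(u)$ to write
$$
E[N(t) - N(s) \mid \mathcal{F}_s] \;=\; \sum_{k=0}^{K_n-1} E\!\left[h_n Y_n(t_k) \,\big|\, \mathcal{F}_s\right] \;=\; E\!\left[\int_s^t \bar Y_n(u)\, du \,\Big|\, \mathcal{F}_s\right],
$$
where $\bar Y_n(u) := Y_n(t_k)$ for $u \in [t_k, t_{k+1})$. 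Condition (i), combined with the splitting $|\bar Y_n(u) - \lambda(u)| \le |Y_n(t_k) - \lambda(t_k)| + |\lambda(t_k) - \lambda(u)|$, gives $\bar Y_n(u) \to \lambda(u)$ for a.e.\ $u \in [s,t]$; condition (ii) yields the uniform bound $|\bar Y_n(u) - \lambda(u)| \le y(u)$ for $u \le t$ and $n$ large; and condition (iii) provides the integrable dominator. Applying the dominated convergence theorem first to pass the limit inside the integral and then under the conditional expectation (with $\lambda + y$ as integrable majorant, using $\lambda \le \bar Y_n + y$) yields
$$
E[N(t) - N(s) \mid \mathcal{F}_s] \;=\; E\!\left[\int_s^t \lambda(u)\, du \,\Big|\, \mathcal{F}_s\right],
$$
which is the desired martingale identity. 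Integrability of $M(t)$ and of $\int_0^t \lambda(u)\, du$ then follow from the same bounds and from finiteness of $E[N(t)]$, itself derivable via Fubini from the inequality $\lambda \le Y_n + y$.

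The hard part will be making the joint passage to the limit rigorous: the refinement of the partition and the $n \to \infty$ limit act simultaneously, so the pointwise convergence of $\bar Y_n$ to $\lambda$ cannot be invoked naively. The cleanest route is to control the grid-shift error $|\lambda(t_k) - \lambda(u)|$ via an a.e.\ regularity property of $\lambda$ (justified because $\lambda$ is the pointwise limit of measurable processes $Y_n$) and the $n$-dependent error $|Y_n(t_k) - \lambda(t_k)|$ via condition (ii); both errors are then absorbed by the integrable dominator $y$. A secondary technicality will be verifying progressive measurability of $\lambda$ and the compensator $\int_0^\cdot \lambda(u)\, du$, which is handled by standard limiting arguments for measurable integrands.
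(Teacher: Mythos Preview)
The paper does not prove this lemma: it is quoted verbatim from Aven (1985) and used as a black box in the proof of Theorem~\ref{thm3}. There is therefore no ``paper's own proof'' to compare your attempt against.

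That said, your outline follows the standard route to Aven's theorem, but two technical points deserve attention. First, the sequence $h_n\downarrow 0$ is \emph{given} by hypothesis, so you cannot also demand that $h_n$ divide $t-s$; you must either allow a short final subinterval and control the resulting remainder (easy, since $N$ is nondecreasing and the increment over an interval of length $o(1)$ has vanishing conditional mean), or argue that it suffices to prove the martingale property for $s,t$ in a dense set. Second, your handling of the grid-shift error $|\lambda(t_k)-\lambda(u)|$ appeals to an ``a.e.\ regularity property of $\lambda$'' that is not part of the hypotheses and need not hold. The clean fix is to avoid this term altogether: bound $\sum_k h_n|Y_n(t_k)-\lambda(t_k)|$ by $\sum_k h_n y(t_k)$ using condition~(ii), recognise both $\sum_k h_n\lambda(t_k)$ and $\sum_k h_n y(t_k)$ as Riemann-type sums, and pass to the limit under the outer expectation using condition~(iii) and Fatou/dominated convergence at the level of $E[\,\cdot\,]$ rather than pointwise in $u$. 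This is essentially how Aven's original argument proceeds.
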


\begin{theorem}\label{thm3}
The intensity process of $\tau_2$ is given by
$$
\lambda_2(u)
=1_{\{\tau_2>u\}} \cdot 1_{\{\tau_1>u\}} \left(\frac{\pi(z_0^{(2)},u)-\int_0^u g(s, u) ds }{\int_{\Omega_{\alpha}} h(r, \theta, u) dr d\theta}\right)
+1_{\{\tau_2 >u\}} \cdot 1_{\{\tau_1 =s\}} \left(\frac{ g(s, u)}{\int_u^{\infty} g(s, v)dv}\right),
$$
where $t_0 < u< t_1$.
\end{theorem}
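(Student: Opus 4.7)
The plan is to apply Aven's lemma (Lemma \ref{aven}) to the counting process $N(u) := 1_{\{\tau_2 \leq u\}}$ with candidate intensity $\lambda_2$ as displayed. Setting
$$
Y_n(u) := \frac{1}{u_n} E[N(u+u_n) - N(u) \mid \mathcal{F}_u] = \frac{1_{\{\tau_2 > u\}}}{u_n} P(\tau_2 \in (u, u+u_n] \mid \mathcal{F}_u),
$$
Lemma \ref{localRate} already provides the pointwise convergence $Y_n(u) \to \lambda_2(u)$ as $u_n \downarrow 0$, which verifies the first condition of Aven's lemma. What remains is to exhibit a non-negative measurable envelope $y(s)$ with $|Y_n(s) - \lambda_2(s)| \leq y(s)$ for $s \leq t$ and $n$ sufficiently large, together with $\int_0^t y(s)\,ds < \infty$ almost surely.

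To build $y$, I would use Lemma \ref{thm1} to decompose $Y_n(s) - \lambda_2(s)$ according to the two events $\{\tau_1 > s\}$ and $\{\tau_1 = s'\}$ for $s' < s$. On each piece, the difference reduces to a time-averaged increment of $\pi(z_0^{(2)}, \cdot)$ or $g(s', \cdot)$ minus its limiting value at $s$, so it suffices to uniformly dominate these kernels by an integrable function. The bound (\ref{pi}) immediately handles the $\pi$ contribution. For $g$, I would reuse the two-regime estimate already produced inside the proof of Lemma \ref{localRate}: on $s' \in (0, u^*)$ the coarse bound (\ref{firin}) yields an envelope controlled by $\int_0^\infty f(r, s')\,dr$, while on $s' \in [u^*, s)$ the Bessel estimate of Lemma \ref{Bessellemma} combined with (\ref{ineq}) gives $g(s', s + \eta_n) \leq C(s - s' + \eta_n)^{\pi/(2\alpha) - 1}$. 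Continuity of $\pi$ and $g$ in the temporal argument (Lemma \ref{gcontinuous}) then allows the dominated convergence theorem to match the pointwise limit with an $L^1$ bound.

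The main technical obstacle is producing a single envelope which is simultaneously uniform in $n$ and integrable across the singular regime $s' \uparrow s$. Since $\alpha \in (0, \pi)$, the exponent $\pi/(2\alpha) - 1$ is strictly greater than $-1/2$, so the Bessel-based bound is integrable on a neighborhood of $s$; combined with the piecewise bound $f(u, u^*, \cdot)$ already displayed in the proof of Lemma \ref{localRate}, this supplies the required $y$. With $y$ in hand, all three Aven conditions are verified, so $N(u) - \int_0^u \lambda_2(v)\,dv$ is an $\mathcal{F}_u$-martingale and the claimed form of $\lambda_2$ is established.
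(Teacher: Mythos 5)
Your proposal is correct and follows essentially the same route as the paper: verify Aven's conditions by taking the pointwise limit from Lemma \ref{localRate} and then dominating $Y_n$ uniformly in $n$ via the bound (\ref{pi}) on $\pi$ and the two-regime estimate on $g$ (the coarse bound (\ref{firin}) away from the singularity, the Bessel-based bound (\ref{ineq}) near it), with integrability near the singular time secured by the exponent $\pi/(2\alpha)-1>-1$. This matches the paper's construction of the envelope $\gamma(u,\omega)$ in the proof of Theorem \ref{thm3}.
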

\begin{proof}
Lemma \ref{localRate} gives a limit of the local default rate, for $0<s<u$,
$$
\begin{array}{ll}
& \displaystyle \lim_{n \rightarrow \infty}
\frac{1_{\{\tau_2 >u\}}}{u_n}P(\tau_2 \in (u, u+u_n] \mid \mathcal{F}_n)\\
=&
\displaystyle 1_{\{\tau_2>u\}}\cdot 1_{\{\tau_1>u\}} \left(\frac{\pi(z_0^{(2)},u)-\int_0^u g(s, u) ds }{\int_{\Omega_{\alpha}} h(r, \theta, u) dr d\theta}\right)
+1_{\{\tau_2 >u\}}\cdot 1_{\{\tau_1 =s\}} \left(\frac{ g(s, u)}{\int_u^{\infty} g(s, v)dv}\right).
\end{array}
$$
To prove this limit is indeed the intensity of $\tau_2$, we need to verify the three conditions in Lemma \ref{aven}.
{
For $\omega \in {\{\tau_2>u, \tau_1> u\}}$ and $0< u <t$, let
$$
Y_n(u, \omega)
=\frac{1}{u_n}\left(\frac{\int_u^{u+u_n}\pi(z_0^{(2)},v)dv-\int_u^{u+u_n}\int_0^u g(s, v) ds dv}{\int_{\Omega_{\alpha}} h(r, \theta, u) dr d\theta}\right).
$$
By Remark \ref{piFunction},
$$
\pi(z_0^{(2)}, u+\xi_n) \leq \pi\left(z_0^{(2)}, \sqrt{\frac{\left(z_0^{(2)}\right)^2}{m_2^2}+\frac{9}{4m_2^4}}-\frac{3}{2m_2^2}\right)
 {}_{=}^{\rm def} \phi(z_0^{(2)}),
$$
and $P(\tau > u)=\int_{\Omega_{\alpha}} h(r, \theta, u) dr d\theta$ is decreasing in $u$, hence,
$$
Y_n(u, \omega) \leq  \frac{\pi(z_0^{(2)},u+\xi_n)}{\int_{\Omega_{\alpha}} h(r, \theta, u) dr d\theta} \leq \frac{\phi(z_0^{(2)})}{\int_{\Omega_{\alpha}} h(r, \theta, t) dr d\theta} {}_{=}^{\rm def} \gamma(u, \omega).
$$
For $\omega \in \{\tau_1=s, \tau_2>u\}$ and $0<u<t$, let
$$
Y_n(u, \omega)= \frac{1_{\{u <s\}}}{u_n}\left(\frac{\int_u^{u+u_n}\pi(z_0^{(2)},v)dv-\int_u^{u+u_n}\int_0^u g(s, v) ds dv}{\int_{\Omega_{\alpha}} h(r, \theta, u) dr d\theta}\right) + \frac{1_{\{u\geq s\}}}{u_n} \left(\frac{\int_u^{u+u_n} g(s, v) dv}{\int_u^{\infty} g(s, v)dv}\right).
$$
One can check that
$$
Y_n(u, \omega) \leq 1_{\{u <s\}}\left(\frac{\pi(z_0^{(2)},u+\xi_n)}{\int_{\Omega_{\alpha}} h(r, \theta, t) dr d\theta}\right)
+ 1_{\{u\geq s\}} \left(\frac{g(s, u+\eta_n) }{\int_t^{\infty} g(s, v)dv}\right).
$$
Let $u^{**}(\omega) \in (s, t)$, for $u \in [u^{**}, t)$,
{similar to Eq. (\ref{firin})},
\begin{equation}\label{gsteq}
g(s, u+\eta_n)  \leq  \frac{\sqrt{m^2_2(u+u_1)+2}}{\sqrt{2 \pi} (u^{**}- s)}\int_0^{\infty} f(r,s) dr.
\end{equation}
We can select $u^{**}(\omega) \in (s, t)$ to be sufficiently close to $s$ and $n_0(\omega) \in \mathcal{N}^+$ such that for $n>n_0(\omega)$ and
$u \in (s, u^{**})$,
$$
\frac{r_0^2}{2s}\left(\frac{u+\eta_n-s}{u+\eta_n-s\cos 2\alpha }\right) \in (0, 1)
$$
and
$$
u+u_n-s\cos 2\alpha =u+u_n-s +2s\sin^2\alpha \geq 2 s \sin^2 \alpha,
$$
which is a positive constant.
{Also similar to Eq. (\ref{ineq}),}
$$
g(s, u+\eta_n) \leq \frac{C}{s^{1+\frac{\pi}{ \alpha}} }\exp\left(\frac{|{\bf m}|^2s}{2}\right) (u+u_n-s)^{\frac{\pi}{2 \alpha}-1}.
$$
Therefore, $g(s, u+\eta_n) \leq \tilde{f}(u, u^{**},s)$
where $\tilde{f}(u, u^{**}, s) =$
{
$$
\left\{
\begin{array}{ll}
\displaystyle \frac{\sqrt{m^2_2(u+u_1)+2}}{\sqrt{2 \pi} (u^{**}- s)}\int_0^{\infty} f(r,s) dr,& u \in [u^{**}, t),\\
\displaystyle \frac{C}{s^{1+\frac{\pi}{ \alpha}} }\exp\left(\frac{|{\bf m}|^2s}{2}\right) \left(1_{\{\alpha \leq \pi/2\}}(u+u_1-s)^{\frac{\pi}{2 \alpha}-1}+ 1_{\{\alpha > \pi/2\}}(u-s)^{\frac{\pi}{2 \alpha}-1}\right), & u \in (s, u^{**}).
\end{array}
\right.
$$}
{Thus, we obtain}
$$
Y_n(u, \omega) \leq 1_{\{u <s\}} \left(\frac{\phi(z_0^{(2)})}{ \int_{\Omega_{\alpha}} h(r, \theta, t) dr d\theta}\right)
+ 1_{\{u\geq s\}} \left(\frac{\tilde{f}(u, u^{**}, s) }{\int_t^{\infty} g(s, v)dv}\right) {}_{=}^{\rm def} \gamma(u, \omega).
$$
Since
$$
\lim_{n \rightarrow \infty} Y_n(u, \omega)=\lambda_2(u, \omega),
$$
we obtain that $\lambda_2(u, \omega) \leq \gamma(u, \omega)$.
One can check that for $0< t \leq t_1$,
$$
\int_0^{t} \gamma(u, \omega) du  < \infty, a.s.
$$
Then all three conditions can be verified.}
\end{proof}

Theorem {\ref{thm3}} gives the intensity of $\tau_2$ when $t_0 < u < t_1$. Theorem {\ref{thm4}} follows immediately when multiply observations are included.

\section{Further Work and Conclusion}\label{Conclusion}

One  possible extension of our model
is to incorporate more than two names.
We assume there are three firms and the asset value of firm $i$ is given by
$V_i(t)$, $i=1,2,3$.
Let $X_i(t)=\ln \frac{V_i(t)}{B_i}$,
where $B_i$ is the default threshold value of
firm $i$.
Then the default time of name $i$ is given by
$$
\tau_i :=\inf\{t>0: X_i(t) \leq 0\}.
$$
Assume that $X_i(t)$, $i=1, 2, 3$, follows the stochastic differential equation
\begin{equation}
d{\bf X}= \mbox{\boldmath$\mu$} dt+ \Sigma d{\bf W},
\end{equation}
where ${\bf W}$ is a two-dimensional standard Brownian motion,
$$
{\bf X}(t)=\left(
\begin{array}{c}
X_1(t)\\
X_2(t)\\
X_3(t)
\end{array}
\right), \quad
\mbox{\boldmath$\mu$}=\left(
\begin{array}{c}
\mu_1\\
\mu_2\\
\mu_3
\end{array}
\right), \quad
\Sigma=
\left(
\begin{array}{cc}
\sigma_1 \sqrt{1-\rho^2}& \sigma_1 \rho\\
0 & \sigma_2\\
\sigma_3 \sqrt{1-\bar{\rho}} & \sigma_3 \bar{\rho}
\end{array}
\right)
$$
and  $X_i(0) > 0$, $i=1,2,3$,
 and $\rho, \bar{\rho}$ ($\bar{\rho}\geq \rho$) characterize the correlation structure of the three names. Denote by
$$
T=
\left(
\begin{array}{ccc}
\sigma_1\sqrt{1-\rho^2} & \sigma_1 \rho & 0\\
0 & \sigma_2 & 0\\
0 &  0& \sigma_3
\end{array}
\right)^{-1}
$$
and define ${\bf Z}=T{\bf X}$, ${\bf m}=T {\mbox{\boldmath$\mu$}}$,
$$
\sigma=T{\Sigma}=\left(
\begin{array}{ccc}
1 &0\\
0 & 1\\
\sqrt{1-\bar{\rho}} &  \bar{\rho}
\end{array}
\right).
$$
Then
$$
d {\bf Z}= {\bf m}dt+ \sigma d{\bf W}.
$$
The equivalent default times can be redefined as
$$
\left\{
\begin{array}{ll}
 \tau_1&=\inf\left\{t>0: Z_1(t)=-\frac{\rho}{\sqrt{1-\rho^2}}Z_2(t)\right\}\\
\tau_2&=\inf\left\{t>0: Z_2(t)=0\right\}\\
\tau_3&=\inf\left\{t>0: Z_3(t)=0\right\}.
\end{array}
\right.
$$
Alternatively,
we have
$$
\left\{
\begin{array}{l}
d Z_1 = m_1 dt+ dW_1\\
d Z_2 = m_2 dt+ dW_2\\
d Z_3 =
(m_3-\sqrt{1-\bar{\rho}^2}m_1-\bar{\rho}m_2) dt
+ \sqrt{1-\bar{\rho}^2}dZ_1 +\bar{\rho}dZ_2.
\end{array}
\right.
$$
If we  assume that
$$
m_3 > \sqrt{1-\bar{\rho}^2}m_1+\bar{\rho}m_2
$$
and
$$
Z_3(0) > \sqrt{1-\bar{\rho}^2}Z_1(0)+\bar{\rho}Z_2(0),
$$
then
the default time $\tau_3> \min\{\tau_1, \tau_2\}$ and the model has practical value. For instance,
one can regard firm 1 as the reference entity
of an insurance contract, firm 2 as the
insurance company who provides protection
to the contract buyer once the reference name defaults,
firm 3 as a reinsurance company that is securer than
firms 1 and 2 and provide a protection to the contract buyer if any of them defaults.

Under these assumptions, one can apply the information setting
in Section 2 and study the intensity processes of the default names by the similar method since we can transform the three name case into
two name case by first consider firms 1 and 2
and then the remaining two names if one of them defaults.
The difficulty of the extension is that
we need to take into account the value of $(Z_1, Z_2)$ at the first default time $\min(\tau_1, \tau_2)$ when considering the remaining two surviving names.
We leave this for further research.

In summary, we present a continuous time structural asset value model
describing the asset value of two firms
driven by correlated Brownian motions and with incomplete information.
We show that the original structural model can be transformed
into a reduced-form intensity-based model.
We derive the conditional distribution of the
default time and that of the asset value of each name.
Furthermore, we derive the explicit form of intensity processes
of the two correlated names  and demonstrate the valuation method of the default intensity in some special cases.
Numerical experiments on the default intensity show that,
the default intensity in the correlated case is nearly
the same as that in independent case, when default is not observed.
Once a default occurred, the default intensity has a sharp change
and this impact decreases gradually.

\vspace{3mm}

\noindent
{\bf Acknownledgement:} This research work was supported by
Research Grants Council of Hong Kong under Grant
Number 17301214 and HKU CERG Grants
and Hung Hing Ying Physical Research Grant.

\end{document}